\newtheorem{assumption}{Assumption}
\newtheorem{lemma}{Lemma}
\newtheorem{remark}{Remark}
\newtheorem{theorem}{Theorem}
\newtheorem{definition}{Definition}
\begin{document}
\title{An Improved Approximation Algorithm for the Hard Uniform Capacitated $k$-median Problem}
\author{Shanfei Li\\Delft Institute of Applied Mathematics, TU Delft, The Netherlands\\shanfei.li@tudelft.nl}
\date{}

\maketitle
\begin{abstract}
In the $k$-median problem, given a set of locations,
the goal is to select a subset of at most $k$ centers so as to minimize the total cost of connecting each location to its nearest center.
We study the uniform hard capacitated version of the $k$-median problem, in which each selected center can only serve a limited number of locations.

Inspired by the algorithm of Charikar, Guha, Tardos and Shmoys,
we give a $(6+10\alpha)$-approximation algorithm for this problem with increasing the capacities by a factor of $2+\frac{2}{\alpha}, \alpha\geq 4$,
which improves the previous best $(32 l^2+28 l+7)$-approximation algorithm proposed by Byrka, Fleszar, Rybicki and Spoerhase violating the capacities by factor $2+\frac{3}{l-1}, l\in \{2,3,4,\dots\}$.
\end{abstract}

\section{Introduction}
In the capacitated $k$-median problem (CKM), we are given a set $N$ of locations (where a center can potentially be opened).
Each location $j\in N$ has a capacity $M$ (uniform capacities), and a demand $d_j$ that must be served.
Assigning one unit of the demand of location $j$ to center $i\in N$ incurs service costs $c_{ij}$.
We assume the service costs are non-negative, identity of indiscernibles, symmetric and satisfy the triangle inequality.
That is, $c_{ij}\geq 0, \forall i,j\in N$; $c_{ij}= 0$, if $i=j$; $c_{ij}=c_{ji}, \forall i,j\in N$ and $c_{it}+c_{tj}\geq c_{ij}, \forall i,j,t\in N$.
The objective is to serve all the demands by opening at most $k$ centers and satisfying the capacity constraints such that the total cost is minimized.
In this paper, we consider the \emph{hard} capacities and \emph{splittable} demands, that is, we allow at most one center to be opened at any location and each location can be served from more than one open center.
(In contrast, the \emph{soft} capacities allows that multiple centers can be opened in a single location.
In the \emph{unsplittable} demands case each location must be served by exactly one open center.)

CKM can be formulated as the following mixed integer program (MIP),
where variable $x_{ij}$ indicates the fraction of the demand of location $j$ that is served by location $i$,
and $y_{i}$ indicates whether location $i$ is selected as a center.
\begin{alignat}{2}
\min\quad&\sum_{i,j\in N} d_j c_{ij} x_{ij}\nonumber\\
\text{subject to:}\quad &\sum_{i\in N}{x_{ij}} =  1,\quad \forall j\in N; \quad &\sum_{j\in N}{d_j x_{ij}} \leq  M y_i, \quad \forall i\in N; \nonumber\\
&\sum_{i\in N}{y_i} \leq  k; \quad & 0 \leq x_{ij} \leq y_i, \quad \forall i, j\in N; \nonumber\\
&y_{i} \in \{0,1\},\quad \forall i\in N.&\label{eqCKM:binary constraint}
\end{alignat}
Replacing constraints (\ref{eqCKM:binary constraint}) by $0 \leq y_{i}\leq 1, \forall i\in N$, we obtain the LP-relaxation of CKM.

\subsection{Related Work and Our Results}
The $k$-median problem  is a classical NP-hard problem in computer science and operations research, and has a wide variety of applications in clustering and data mining \cite{BradleyFM,JainD}. The uncapacitated $k$-median problem was studied extensively \cite{ArcherRS,AryaGKMMP,ByrkaPRST,CharikarG,CharikarGTS,JainMS,JainV,LiS}, and the best known approximation algorithm was recently given by Byrka et al. \cite{ByrkaPRST} with approximation ratio $2.611+\epsilon$ by improving the algorithm of Li and Svensson \cite{LiS}.

The capacitated versions of $k$-median problem are much less understood. The above LP-relaxation has an unbounded integrality gap. More precisely, the capacity or the number of opened centers has to be increased by a factor of at least 2, if we try to get an integral solution within a constant factor of the cost of an optimal solution to the LP-relaxation \cite{CharikarGTS}.
All the previous attempts with constant approximation ratios for this problem violate at least one of the two kinds of hard constraints: the capacity constraint and cardinality constraint (at most $k$ centers can be opened), even the local search technique.

For the hard uniform capacity case, by increasing the capacities within a factor of $3$, Charikar et al. \cite{Charikar,CharikarGTS,Guha} presented a $16$-approximation algorithm based on LP-rounding.
This violation ratio of capacities was recently improved to $2+\frac{3}{l-1}$, $l\in \{2,3,4,\dots\}$ by Byrka et al. \cite{ByrkaFRS},
with the corresponding approximation ratio of  $32 l^2+28l+7$. In addition, Korupolu et al. \cite{KorupoluPR} proposed a $(1+{5}/{\epsilon})$-approximation algorithm while opening at most $(5+\epsilon)k$ centers, and a $(1+{\epsilon})$-approximation algorithm while opening at most $(5+5/\epsilon)k$ centers based on a local search technique.

For soft non-uniform capacities, Chuzhoy and Rabani \cite{ChuzhoyR} presented a $40$-approximation algorithm while violating the capacities within a factor of $50$ based on primal-dual and Lagrangian relaxation methods. For hard non-uniform capacities, Gijswijt and Li \cite{GijswijtL} gave a ($7+\epsilon$)-approximation algorithm while opening at most $2k$ centers.

In this paper, we improve the algorithm of Charikar et al. \cite{CharikarGTS} to reduce its violation ratio of capacities from 3 to $2+\frac{2}{\alpha}, \alpha\geq 4$ and get an $(6+10\alpha)$-approximation algorithm for the hard uniform capacitated $k$-median problem, which improves the previous best approximation ratio for any violation ratio of capacities in $(2,3)$. The approximation ratios we obtain for violation ratio of 2.1, 2.3, 2.5, 2.75 and 3 (for instance) are summarized in the following table.
\begin{center}
  \begin{tabular}{ c | c | c | c | c | c }
    violation ratio of capacities & 2.1 & 2.3  & 2.5 & 2.75 & 3  \\ \hline
    previous best & 31627  & 4187  & 1771 & 947 & 16  \\ \hline
    our algorithm & 206  & 72.67  & 46 & 46  & 46  \\
  \end{tabular}
\end{center}
Note that with increasing the capacities by a factor of at least 3, the best approximation ration is still due to Charikar et al. \cite{CharikarGTS}.

Additionally, for metric facility location problems there is a slightly different model for the capacitated $k$-median \cite{ByrkaFRS,GijswijtL},
in which we are given a set $F$ of facilities and a set $D$ of clients.
Each facility has a capacity $M$. Each client $j\in D$ has a demand $d_j$ that has to be served by facilities.
Note that the capacity of each client is 0. This is different from our model, in which each location has a capacity $M$.
We show that our algorithm can be easily extended to solve this model with increasing the approximation ratio
by a factor at most $2+\frac{1}{6+10\alpha}$ (the violation ratio of capacities is the same, see Appendix \ref{appendix:extent to solve another model} for details).

\subsection{The Main Idea Behind Our Algorithm}
Based on an optimal solution to the LP-relaxation, Charikar et al. \cite{CharikarGTS} construct a $\{\frac{1}{2},1\}$-solution
$(x,y)$ in which $y_i\in \{0,\frac{1}{2},1\},\forall i\in N;$ $\sum_{j\in N} x_{ij}d_j\leq M$, if $y_i=\frac{1}{2};$ and
$\sum_{j\in N} x_{ij}d_j\leq 2M$, if $y_i=1.$
Note that $\sum_{j\in N} x_{ij}d_j\leq M y_i$ could be violated in this solution.

First, they directly build a center at location $i$ with $y_i=1$. Then, they construct a collection of rooted stars spanning the locations $i\in N$ with $y_i=\frac{1}{2}.$
By a star by star rounding procedure, exactly half of the locations with fractional opening value $\frac{1}{2}$ are chosen as centers, and reassign the demand served by other locations (not chosen as centers) to the centers.
In the worst case, the capacity of the root of some star has to be increased by factor 3 to satisfy the capacity constraint.
Take Fig.\ref{example(CKM)} as an example. The star $Q_t$, rooted at $t$, has two children $j_1$ and $j_2$ with $y_t=y_{j_1}=y_{j_2}=\frac{1}{2}$.
In the worst case of Charikar et al. algorithm, we are allowed to build at most $\lfloor y_t+y_{j_1}+y_{j_2} \rfloor$ centers, i.e., 1 center.
Suppose we build a center at the root $t$, and reassign the demand served by ${j_1}$ and ${j_2}$ to $t$.
So the capacity of $t$ has to be increased by factor 3 to satisfy the capacity constraint,
as $\sum_{j\in N} x_{ij}d_j\leq M$ for $i=t,j_1,j_2$.

\begin{figure}[!htb]
\centering
\includegraphics[totalheight=1.0in]{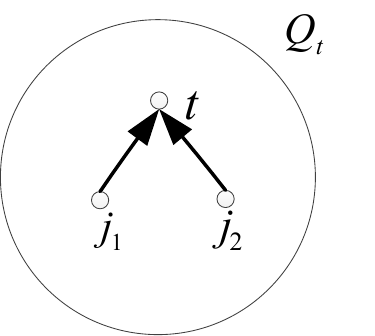}
\caption{\textsl{A star $Q_t$.}}
\label{example(CKM)}
\end{figure}

We generalize the algorithm of Charikar et al. to improve its violation ration from 3 to $2+\epsilon$.
The key idea behind our algorithm relies on two observations.
One is that if we can obtain a $\{1-\frac{1}{\delta},1\}$-solution,
then 2 centers can be built for the above example in the worst case by setting $\delta\geq 3,$ as then $\lfloor y_t+y_{j_1}+y_{j_2} \rfloor \geq \lfloor \frac{2}{3}+\frac{2}{3}+\frac{2}{3}\rfloor= 2$.
Consequently, we only need to blow up the capacity of location $t$ by factor 2 instead of 3, by building centers at $t$ and $j_2$, and assigning the demand served by ${j_1}$ to $t$.
However, this example only shows one kind of stars. To make sure the violation ratio can be improved for all kinds of stars, we construct a $\{(\frac{\alpha-2}{\alpha},\frac{\alpha-1}{\alpha}], [1,2)\}$-solution $(x,y)$ such that \begin{itemize}
\item[1.] for each $i\in N$, $\frac{\alpha-2}{\alpha}< {y}_i\leq \frac{\alpha-1}{\alpha}$, or $1\leq {y}_i< 2$, or ${y}_i=0$; and
$|\{i\in N \mid \frac{\alpha-2}{\alpha}< {y}_i < \frac{\alpha-1}{\alpha}\}|\leq 1;$
\item[2.] if $\frac{\alpha-2}{\alpha}< {y}_i\leq \frac{\alpha-1}{\alpha}$, then  $\sum_{j\in N} d_j x_{ij}\leq M;$
\item[3.] if $1\leq {y}_i< 2$, then  $\sum_{j\in N} d_j x_{ij}\leq M {y}_i.$
\end{itemize}

Another one is that constraints $y_i\leq 1, \forall i\in N$ hold in each step of the algorithm by Charikar et al. That is, they round $y_i>1$ to be 1 for each $i\in N$ in each step.
This is a quite natural operation since we consider the hard capacitated case,
i.e., at most one center can be opened at any location.
However, we observe that after obtaining an optimal solution to the LP-relaxation, it is sufficient to make sure constraints $y_i\leq 1, \forall i\in N$ hold in our last step. For all other steps (except last step), this rounding can be avoided by relaxing the constraint $y_i\leq 1$ to $y_i< 2$. We use an example to show the profit we can gain from avoiding this rounding. Suppose we have a star $Q_t$ rooted at $t$ with one child $j_1$. Moreover, $y_t=1.9$ and $y_{j_1}=0.5$. Then, in the worst case, we can build $\lfloor y_t+y_{j_1}\rfloor= 2$ centers. We open $t$ and $j_1$. Consequently, we only need to increase the capacity of $t$ by factor 1.9 (note that if $1\leq {y}_i< 2$, then  $\sum_{j\in N} d_j x_{ij}\leq M {y}_i$ for our $\{(\frac{\alpha-2}{\alpha},\frac{\alpha-1}{\alpha}], [1,2)\}$-solution). However, if we round 1.9 to 1, we obtain a star $Q_t$ with $y_t=1$ and $y_{j_1}=0.5$. Then, in the worst case, only 1 center can be built as $\lfloor y_t+y_{j_1}\rfloor= 1$. Without loss of generality, suppose we build a center at $t$, and assign the demand served by ${j_1}$ to $t$. Then, we need to increase the capacity of $t$ by factor 2.9.

\section{An Improved Approximation Algorithm}\label{sec:approximation algorithm for CKM}
We consider $y_i$ as the \textit{opening value} of location $i$. If $y_i\in (0,1)$, we say that location $i$ is \textit{fractionally opened} (as a center).
From now on, let $(x,y)$ denote an optimal solution to the LP-relaxation with total cost $C_{LP}$.
For each $j\in N$, define $C_j=\sum_{i\in N} c_{ij}x_{ij}.$ Note that $C_{LP}=\sum_{j\in N} d_j C_j.$ The outline of our algorithm is similar to \cite{CharikarGTS}.

Step 1. We partition locations to a collection of clusters.
The total opening value of each cluster is at least $\frac{\alpha-1}{\alpha}, \alpha\geq 4.$

Step 2. For each cluster, we integrate the nearby opened locations to obtain a
$[\frac{\alpha-1}{\alpha},2)$-solution $(x',y')$ to the LP-relaxation, which satisfies the relaxing constraints $0\leq y'_i< 2$ instead of $0\leq y'_i\leq 1$ for each $i\in N$.

Step 3. We redistribute the opening values among locations with $y'_i\in [\frac{\alpha-1}{\alpha},1)$ to obtain a $\{(\frac{\alpha-2}{\alpha},\frac{\alpha-1}{\alpha}], [1,2)\}$-solution $(x',\hat{y})$, which satisfies the relaxing constraints $\sum_{j\in N}{d_j x'_{ij}} \leq  M $ if $\hat{y}_i\in (0,1)$, $\sum_{j\in N}{d_j x'_{ij}} \leq  M \hat{y}_i$ otherwise, instead of $\sum_{j\in N}{d_j x'_{ij}} \leq  M \hat{y}_i$ for each $i\in N$.

Step 4. We round the $\{(\frac{\alpha-2}{\alpha},\frac{\alpha-1}{\alpha}], [1,2)\}$-solution to be an integral solution with increasing the capacities by a factor of $2+\frac{2}{\alpha}$.

\subsection{Step 1: Clustering}
In this step, we will partition locations into clusters, and for each cluster select a single location as the \textit{core} of this cluster,
such that each location in the cluster is not far to its cluster core and the cores are sufficiently far to each other.

Let $N'$ be the collection of all cluster cores. Let $N'(j)$ denote the closest cluster core to $j$ in $N'$.
For each $l\in N'$, let $M_l$ denote the cluster whose core is $l$,
and define $Z_l=\sum_{j\in M_l} {y_j}$ be the total opening value of all locations in cluster $M_l$.
\begin{definition}We call a cluster $M_l$ \textit{terminal} if $Z_l\geq 1$, \textit{non-terminal} if $0<Z_l<1$.\end{definition}

Let $n=|N|.$ The clustering is done as follows.\\
\begin{procedure}[H]
\caption{1. Clustering()}
1. order all locations in nondecreasing order of ${C_j}$, (without loss of generality, assume ${C_1}\leq \cdots \leq {C_n}$)\;
2. set $N':=\emptyset$\;
3. \For{$j=1$ to $n$}
 {
    find a location $l\in N'$ such that $c_{lj}\leq 2\alpha C_j,$  where $\alpha\geq 4$\;
    \If{no such location is found}
    {
     choose $j$ as a cluster core, i.e., set $N':=N'\cup \{j\}$\;
    }
 }
4. set $M_l:=\emptyset$, $\forall l\in N'$\;
5. \For{$j=1$ to $n$}
 {
    \If{$j$ is closer to cluster core $l\in N'$ than all other cluster cores (break ties arbitrarily)}
    {
     add location $j$ to cluster $M_l$. (i.e., set $M_l:=\{j\in N \mid N'(j)=l\}$.)
    }
 }
\end{procedure}
After this step, the following properties hold ($\alpha\geq 4$):
\vspace{5pt}

[\textbf{1a}]. $\forall j\in M_l, l\in N'$, $c_{lj}\leq 2\alpha {C_j}$;

[\textbf{1b}]. $\forall l,l'\in N'$ and $l\neq l'$, $c_{ll'}> 2\alpha\max\{C_l, C_{l'}\}$;

[\textbf{1c}]. $\forall l\in N', Z_l=\sum_{j\in M_l} {y_j}\geq \frac{\alpha-1}{\alpha}$;

[\textbf{1d}]. $\bigcup_{l\in N'} M_l=N;$ and $M_l\bigcap M_{l'}=\emptyset, \forall l,l'\in N'$ and $l\neq l'$.

\vspace{5pt}
We can easily get property \textbf{1a}, \textbf{1b} and \textbf{1d} from the Procedure Clustering.
\begin{lemma}\label{lemma:property 1c}
\textup{(property \textbf{1c})} $\forall l\in N', Z_l\geq \frac{\alpha-1}{\alpha}$. \textup{(See Appendix \ref{appendix:property 1c} for the details of proof.)}
\end{lemma}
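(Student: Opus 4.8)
The plan is to prove property \textbf{1c}, namely that every cluster $M_l$ collects enough total LP opening value, $Z_l = \sum_{j \in M_l} y_j \geq \frac{\alpha-1}{\alpha}$. The natural strategy is to exhibit, for a fixed core $l \in N'$, a large subset of locations $j$ that are forced into $M_l$ and whose $y_j$-values sum to at least $\frac{\alpha-1}{\alpha}$. The prime candidate is the set of locations $j$ that route a substantial fraction of their demand to $l$ in the LP solution $(x,y)$: since $\sum_i x_{ij} = 1$, if many clients fractionally connect to $l$, then $y_l$ together with these connections should accumulate the required mass.

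First I would fix a core $l \in N'$ and consider the locations $j$ for which $l$ is ``close,'' i.e. those with $c_{lj}$ small relative to $C_j$. The key tool is Markov's inequality applied to the LP connection cost $C_j = \sum_i c_{ij} x_{ij}$: for a threshold of the form $c_{ij} \leq \alpha C_j$, the total $x$-mass that $j$ sends to centers farther than $\alpha C_j$ is at most $\frac{1}{\alpha}$, so $j$ sends at least $1 - \frac{1}{\alpha} = \frac{\alpha-1}{\alpha}$ of its demand to centers within distance $\alpha C_j$. The plan is to show that such nearby centers $i$ must themselves lie in the cluster $M_l$ (or be close enough to $l$), so that their opening values $y_i \geq x_{ij}$ contribute to $Z_l$. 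Concretely, I would argue that if $i$ is within distance $\alpha C_j$ of $j$ and $l$ is the core selected for $j$ with $c_{lj} \leq 2\alpha C_j$, then by the triangle inequality $i$ is within $3\alpha C_j$ of $l$; separation property \textbf{1b} then prevents $i$ from being pulled toward any other core, forcing $i \in M_l$.

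The cleanest realization, which I expect the author uses, is to pick $j = l$ itself (the core). By Markov applied to $C_l$, location $l$ sends at least $\frac{\alpha-1}{\alpha}$ of its demand to centers $i$ with $c_{il} \leq \alpha C_l$. For each such $i$ we have $x_{il} \leq y_i$ from the LP constraint $x_{ij} \leq y_i$, so these opening values sum to at least $\frac{\alpha-1}{\alpha}$. The crucial step is then to verify that every such nearby $i$ belongs to $M_l$: since $c_{il} \leq \alpha C_l \leq 2\alpha C_l$, location $i$ cannot serve as the ``witness'' for any \emph{other} core, and I would use property \textbf{1b} (cores are pairwise more than $2\alpha \max\{C_l, C_{l'}\}$ apart) together with the nondecreasing ordering $C_1 \leq \cdots \leq C_n$ to conclude that $l$ is in fact the closest core to $i$, hence $i \in M_l$. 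Summing over these $i$ gives $Z_l \geq \sum_{i : c_{il} \leq \alpha C_l} y_i \geq \sum_i x_{il} \cdot \mathbf{1}[c_{il} \leq \alpha C_l] \geq \frac{\alpha-1}{\alpha}$.

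The main obstacle I anticipate is the cluster-membership step: showing rigorously that the nearby centers $i$ are assigned to $M_l$ and not siphoned off to a competing core $l'$. This requires carefully combining the separation bound \textbf{1b} with the ordering of the $C_j$ values and the factor-of-$2\alpha$ slack built into the clustering threshold, and one must handle the distinction between a center $i$ being \emph{near} $l$ versus $i$ being \emph{assigned} to $l$ (the tie-breaking and the distances to all other cores). The arithmetic tying $\alpha C_l$, $2\alpha C_l$, and $3\alpha C_l$ through the triangle inequality is routine, but getting the inequality direction right so that the Markov slack $\frac{1}{\alpha}$ exactly yields the target $\frac{\alpha-1}{\alpha}$ — rather than a weaker bound — is where the constants must be tracked precisely.
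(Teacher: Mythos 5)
Your proposal is correct and follows essentially the same route as the paper: fix the core $l$, use Markov's inequality on $C_l=\sum_i c_{il}x_{il}$ to show that locations $i$ with $c_{il}\leq \alpha C_l$ carry $x$-mass at least $\frac{\alpha-1}{\alpha}$, argue via the triangle inequality and the separation property \textbf{1b} that each such $i$ must lie in $M_l$ (if $i\in M_{l'}$ then $c_{ll'}\leq 2c_{il}\leq 2\alpha C_l$, a contradiction), and conclude $Z_l\geq\sum_{i:c_{il}\leq\alpha C_l}y_i\geq\sum_{i:c_{il}\leq\alpha C_l}x_{il}\geq\frac{\alpha-1}{\alpha}$. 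The only cosmetic difference is that the paper's membership argument needs neither the ordering of the $C_j$ nor the $3\alpha C_j$ detour you mention.
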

\noindent We give a brief idea. First, we show $i\in M_l$ if $c_{il}\leq \alpha C_l$.
Then, note that $\sum_{i\in N : c_{il}>\alpha C_l} x_{il}<\frac{1}{\alpha}$, otherwise $\sum_{i\in N}{c_{il} x_{il}}>C_l$, a contradiction. So, $Z_l\geq \sum_{i\in N : c_{il}\leq \alpha C_l} y_{i}\geq \sum_{i\in N : c_{il}\leq \alpha C_l} x_{il} \geq \frac{\alpha-1}{\alpha}.$

\subsection{Step 2: Obtaining a $[\frac{\alpha-1}{\alpha},2)$-solution}
We will get rid of locations with relatively small fractional opening value in this step, by constructing a $[\frac{\alpha-1}{\alpha},2)$-solution $(x',y')$ in which $y'_i=0$ or $\frac{\alpha-1}{\alpha}\leq y'_i<2$, $\forall i\in N$.
For each cluster $M_l$, we transfer the amount of locations (their opening values and the demands served by these locations) far away from the cluster core $l$ to locations closer to $l$.




In this step, initially set $y'_i=y_i, x'_{ij}=x_{ij} \forall i,j\in N$.
Then, we consider clusters one by one. For each cluster $M_l, l\in N'$, order locations in $M_l$ in nondecreasing value of $c_{lj}, j\in M_l$.
Without loss of generality, assume we get an order $j_1, \cdots, j_u$ (Note that $j_1=l$).
If we decide to move the amount of location $j_b$ to $j_a$ ($1\leq a<b\leq u$), then perform the following transfer operations \cite{Charikar,Guha}:\\
\begin{procedure}[H]
\caption{2. Move($j_a$,$j_b$)}
 {
  1. let $\delta=\min \{1-y'_{j_a},y'_{j_b}\}$\;
  2. for all $j\in N$, set $x'_{j_a j}:=x'_{j_a j}+\frac{\delta}{y'_{j_b}} x'_{j_b j}$,
  $x'_{j_b j}:=x'_{j_bj}-\frac{\delta}{y'_{j_b}} x'_{j_b j}$\;
  3. set $y'_{j_a}:=y'_{j_a}+\delta$,$y'_{j_b}:=y'_{j_b}-\delta$\;
 }
\end{procedure}

\begin{lemma}\label{lemma:Move}
After Procedure Move($j_a$,$j_b$), we still have
\begin{itemize}
\item[(1)] $\sum_{j\in M_l} y'_j = \sum_{j\in M_l} y_j,$ for each $l\in N'$;
\item[(2)] for each $j\in N$, $\sum_{i\in N}{x'_{ij}}=1$;
\item[(3)] $\sum_{j\in N} d_j x'_{ij}\leq M y'_i$, for each $i\in N$. \textup{(See Appendix \ref{appendix:Move} for the proof.)}
\end{itemize}
\end{lemma}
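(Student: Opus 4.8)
The plan is to treat all three assertions as \emph{invariants} that hold before the call and are preserved by it. They certainly hold at the very start of Step~2, where $(x',y')=(x,y)$ is feasible for the LP-relaxation, so by induction on the sequence of moves it suffices to show that a single execution of Move($j_a,j_b$) preserves them. The decisive structural fact is that the procedure touches only four kinds of quantities: the opening values $y'_{j_a},y'_{j_b}$, and for every $j\in N$ the two assignment entries $x'_{j_aj},x'_{j_bj}$; everything else is left untouched. Throughout I would write $y^{\circ},x^{\circ}$ for the values just before the call and $y',x'$ for the values just after, and set $\lambda=\delta/y^{\circ}_{j_b}$, noting $0\le\lambda\le 1$ since $\delta=\min\{1-y^{\circ}_{j_a},y^{\circ}_{j_b}\}\le y^{\circ}_{j_b}$. (The call is only made when $y^{\circ}_{j_b}>0$, so $\lambda$ is well defined; if $y^{\circ}_{j_b}=0$ then $\delta=0$ and nothing changes.)

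Parts (1) and (2) reduce to exact cancellations. For (1), Step~3 increases $y'_{j_a}$ by $\delta$ and decreases $y'_{j_b}$ by the same $\delta$; since both $j_a$ and $j_b$ lie in the \emph{same} cluster $M_l$ (the procedure orders and processes one cluster at a time), these two changes cancel in $\sum_{j\in M_l}y'_j$, and every other cluster is untouched. For (2), fix $j\in N$: Step~2 adds $\lambda x^{\circ}_{j_bj}$ to $x'_{j_aj}$ and subtracts exactly $\lambda x^{\circ}_{j_bj}$ from $x'_{j_bj}$, so the column sum $\sum_{i\in N}x'_{ij}$ is unchanged and hence still equals $1$.

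Part (3) is the only place that uses the pre-Move capacity bounds, and it is where I expect the (mild) difficulty to sit. For $i\notin\{j_a,j_b\}$ neither side of the inequality moves. For $i=j_b$, every entry satisfies $x'_{j_bj}=(1-\lambda)x^{\circ}_{j_bj}$ and the new opening value is $y'_{j_b}=y^{\circ}_{j_b}-\delta=(1-\lambda)y^{\circ}_{j_b}$, so both $\sum_j d_jx'_{j_bj}$ and $My'_{j_b}$ are scaled by the same nonnegative factor $1-\lambda$ and the pre-Move inequality survives. The one substantive case is $i=j_a$, where I would use additivity:
\[
\sum_{j\in N} d_j x'_{j_aj}=\sum_{j\in N} d_j x^{\circ}_{j_aj}+\lambda\sum_{j\in N} d_j x^{\circ}_{j_bj}\le My^{\circ}_{j_a}+\lambda\,My^{\circ}_{j_b}=M\bigl(y^{\circ}_{j_a}+\delta\bigr)=My'_{j_a},
\]
applying the pre-Move capacity bounds at $j_a$ and at $j_b$ and then the identity $\lambda y^{\circ}_{j_b}=\delta$. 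This shows that exactly $M\delta$ units of capacity budget are transferred to $j_a$ alongside the $\delta$ units of opening value, so the move is \emph{capacity-conservative} and the constraint at $j_a$ is restored with the entire transferred budget accounted for. Collecting the three cases gives (3), and together with (1) and (2) the induction goes through.

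The genuine obstacle, such as it is, is confined to the $i=j_a$ estimate: one must see that the reassigned demand $\lambda\sum_j d_j x^{\circ}_{j_bj}$ is bounded by $M\delta$ rather than by something merely proportional to the raw entries $x^{\circ}_{j_bj}$, and this is exactly what the pre-Move constraint at $j_b$ delivers once it is multiplied through by $\lambda$. Everything else is bookkeeping.
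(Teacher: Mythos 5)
Your proof is correct and follows essentially the same route as the paper's: parts (1) and (2) by exact cancellation, part (3) by scaling the $j_b$ capacity inequality by $1-\lambda$ and adding $\lambda$ times the $j_b$ inequality to the $j_a$ inequality to obtain $M(y^{\circ}_{j_a}+\delta)$. The only differences are cosmetic (the explicit $\lambda$ notation and the remark on the degenerate case $y^{\circ}_{j_b}=0$, which the paper avoids because Procedure 3 only calls Move with $y'_{j_b}>0$).
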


We use the following procedure to decide whether we move the amount of location $j_b$ to $j_a$.\\
\begin{procedure}[H]
\caption{3. Concentrate($M_l$)}
 \While{there exists a location in $M_l$ with fractional opening value}
 {
  1. let $j_a$ be the first location in the sequence $j_1,\cdots,j_u$ such that $0\leq y'_{j_a}<1$\;
  2. let $j_b$ be the first location in the sequence $j_{a+1},\cdots,j_u$ such that $0<y'_{j_b}\leq 1$\;
  3. \If{$j_a$ and $j_b$ both exist}
  {
   execute procedure Move($j_a$,$j_b$) to move the amount of $j_b$ to $j_a$\;
  }
  4. \If{$j_a$ exists but $j_b$ does not exist}
  {
    \If{$M_l$ is a terminal cluster,i.e.,$a\geq 2$}
    {
     set $y'_{j_{a-1}}:=y'_{j_{a-1}}+y'_{j_{a}}, y'_{j_a}:=0$\;
     for each $j\in N$, set $x'_{j_{a-1} j}:=x'_{j_{a-1} j}+x'_{j_{a} j}, x'_{j_{a} j}:=0$\;
    }
   terminate.
  }
 }
\end{procedure}
\begin{lemma}\label{lemma:last step of concentrating a terminal cluster}
If in Procedure 3 $j_a$ exists but $j_b$ does not exist, and $M_l$ is a terminal cluster, then $a\geq 2$ and $y'_{j_{a-1}}=1$.
\end{lemma}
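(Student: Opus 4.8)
The plan is to read off the exact configuration of the opening values $y'_{j_1},\dots,y'_{j_u}$ at the moment Procedure 3 reaches the branch ``$j_a$ exists but $j_b$ does not exist,'' and then to combine the conservation of total opening value from Lemma~\ref{lemma:Move}(1) with the terminality hypothesis $Z_l\ge 1$.

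First I would establish the invariant that, throughout all calls to Procedure Move inside Concentrate($M_l$), and up to but not including the final merge in its step~4, every opening value satisfies $0\le y'_i\le 1$. This holds initially because $y'_i=y_i$ and the LP-relaxation enforces $0\le y_i\le 1$, and it is preserved by each Move: the choice $\delta=\min\{1-y'_{j_a},\,y'_{j_b}\}$ guarantees $\delta\le 1-y'_{j_a}$, so $y'_{j_a}$ stays at most $1$, and $\delta\le y'_{j_b}$, so $y'_{j_b}$ stays nonnegative, while all other coordinates are untouched. Since the step~4 merge occurs only at termination, the configuration we must analyze is still governed by this invariant.

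Next I would pin down the configuration precisely. Because $j_a$ is selected as the \emph{first} index with $0\le y'_{j_a}<1$, every earlier location has $y'_{j}\ge 1$, and the invariant forces $y'_{j_1}=\cdots=y'_{j_{a-1}}=1$. Because no $j_b$ exists in $j_{a+1},\dots,j_u$ with $0<y'_{j_b}\le 1$, each later location has $y'_{j}\notin(0,1]$; combined with the invariant $y'_{j}\le 1$ this forces $y'_{j_{a+1}}=\cdots=y'_{j_u}=0$. Hence the total opening value in the cluster is exactly $\sum_{j\in M_l}y'_j=(a-1)+y'_{j_a}$. Now I would invoke terminality: by Lemma~\ref{lemma:Move}(1) the Move operations preserve the cluster sum, so $\sum_{j\in M_l}y'_j=\sum_{j\in M_l}y_j=Z_l\ge 1$. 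Substituting gives $(a-1)+y'_{j_a}\ge 1$ with $0\le y'_{j_a}<1$; if $a=1$ this would read $y'_{j_a}\ge 1$, contradicting $y'_{j_a}<1$, so $a\ge 2$. With $a\ge 2$ the location $j_{a-1}$ exists, and the configuration above already yields $y'_{j_{a-1}}=1$, completing both claims.

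I do not expect a serious obstacle here; the only point demanding care is the bookkeeping of exactly when the upper bound $y'_i\le 1$ is available. The subtlety is that the very step~4 merge being analyzed is what can raise $y'_{j_{a-1}}$ above $1$, so one must be explicit that the $[0,1]$ invariant is applied to the configuration \emph{before} that merge is performed, and that Lemma~\ref{lemma:Move}(1) is what licenses equating the current cluster sum with the original $Z_l$.
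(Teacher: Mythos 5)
Your proof is correct and takes essentially the same approach as the paper: read off that $y'_{j_t}=1$ for $t<a$ and $y'_{j_s}=0$ for $s>a$, then use conservation of the cluster sum together with $Z_l\ge 1$ to rule out $a=1$. The only difference is that you make explicit the $0\le y'_i\le 1$ invariant (preserved by the choice of $\delta$ in Move) that the paper leaves implicit, which is a worthwhile clarification but not a new argument.
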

\begin{proof}
Since $M_l$ is a terminal cluster, we have $Z_l\geq 1$.
Moreover, we know $y'_{j_t}=1$ for each $t<a$ and $y'_{j_s}=0$ for each $s>a$, as $j_b$ does not exist.
Thus, $a\geq 2$. Otherwise, $Z_l<1$, a contradiction.
\end{proof}

\begin{lemma}\label{lemma:Concentrate}
 After this step, we have the following properties

 \textup{[\textbf{2a}]}. for all $i\in N$, $\frac{\alpha-1}{\alpha}\leq y'_i<2$ or $y'_i=0$; and $\sum_{j\in N} d_j x'_{ij}\leq M y'_i$;

 \textup{[\textbf{2b}]}.  $\sum_{i\in N} y'_i = \sum_{i\in N} y_i\leq k;$

 \textup{[\textbf{2c}]}.  $x'_{ij} \leq y'_i, \forall i, j\in N.$
\end{lemma}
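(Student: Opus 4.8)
The plan is to treat the three statements in increasing order of difficulty, obtaining [\textbf{2c}] and the capacity inequality in [\textbf{2a}] as invariants preserved by every elementary operation of Procedure Concentrate, deriving [\textbf{2b}] from conservation of opening value per cluster, and reserving the real work for the opening-value dichotomy in [\textbf{2a}], which calls for a structural analysis of how the procedure terminates. I would first record a running invariant that throughout the \textbf{while} loop every opening value stays in $[0,1]$: initially $y'_i=y_i\in[0,1]$, and in Procedure Move the step size $\delta=\min\{1-y'_{j_a},y'_{j_b}\}$ is chosen exactly so that $y'_{j_a}$ cannot exceed $1$ and $y'_{j_b}$ cannot drop below $0$. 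With this in hand, [\textbf{2c}] and the capacity bound follow by induction on the operations, the base case being the LP feasibility of $(x,y)$. For a Move, Lemma~\ref{lemma:Move}(3) supplies the capacity inequality, while $x'_{ij}\le y'_i$ is preserved because the updated $j_a$-row satisfies $x'_{j_aj}+\tfrac{\delta}{y'_{j_b}}x'_{j_bj}\le y'_{j_a}+\delta$ and the $j_b$-row is multiplied by the nonnegative factor $1-\tfrac{\delta}{y'_{j_b}}$. For the single merge step in the terminal case, folding the $j_a$-row into the $j_{a-1}$-row and $y'_{j_a}$ into $y'_{j_{a-1}}$ preserves both inequalities term by term, and zeroing $j_a$ is trivially fine.

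For [\textbf{2b}] I would use that Procedure Move conserves the total opening value of a cluster (Lemma~\ref{lemma:Move}(1)) and that the merge step merely shifts $y'_{j_a}$ onto $y'_{j_{a-1}}$, hence conserves it as well; since by property \textbf{1d} the clusters partition $N$, summing over clusters gives $\sum_i y'_i=\sum_i y_i\le k$, the final inequality being the cardinality constraint of the LP.

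The heart of the argument is the opening-value dichotomy in [\textbf{2a}], which I would obtain by describing the profile of a cluster $M_l$ at termination. Because $j_a$ is always chosen as the first location with value below $1$, every location preceding it carries value exactly $1$; and when the loop stops with $j_a$ present but $j_b$ absent, every location after $j_a$ carries value $0$ while $j_a$ is the unique remaining fractional location, so $y'_{j_a}\in(0,1)$. I would then split on the cluster type. For a terminal cluster, Lemma~\ref{lemma:last step of concentrating a terminal cluster} gives $a\ge2$ and $y'_{j_{a-1}}=1$, so after the merge $y'_{j_{a-1}}=1+y'_{j_a}\in(1,2)$ and $y'_{j_a}=0$, while all other values are $0$ or $1$; thus every value lies in $\{0\}\cup[\tfrac{\alpha-1}{\alpha},2)$ (and if instead the loop simply exhausts all fractional locations, every value is already $0$ or $1$). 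For a non-terminal cluster, $Z_l<1$ together with conservation of opening value means no location can ever attain value $1$, so the prefix of $1$'s is empty, forcing $a=1$; all mass then concentrates at the core, giving $y'_{j_1}=Z_l$ and $0$ elsewhere, and property \textbf{1c} (Lemma~\ref{lemma:property 1c}) yields $y'_{j_1}=Z_l\in[\tfrac{\alpha-1}{\alpha},1)$. Combining the two cases with the invariant from the first paragraph establishes [\textbf{2a}].

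The main obstacle is precisely this last point: ruling out a surviving fractional value strictly below $\tfrac{\alpha-1}{\alpha}$. It is not guaranteed by the mechanics of Move by itself, and rests on coupling conservation of the cluster total with the lower bound $Z_l\ge\tfrac{\alpha-1}{\alpha}$ from property \textbf{1c}. The supporting bookkeeping—checking that every iteration leaves the clean shape ``prefix of $1$'s, one fractional value, suffix of $0$'s,'' which is what licenses the case split—is routine but must be stated with care, since the whole dichotomy hinges on it.
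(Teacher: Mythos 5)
Your proposal is correct and follows essentially the same route as the paper: conservation of the per-cluster opening value plus property \textbf{1c} for the non-terminal case, Lemma~\ref{lemma:last step of concentrating a terminal cluster} and the final merge for the terminal case, and Lemma~\ref{lemma:Move} for the capacity bound. The only (cosmetic) difference is that you carry $x'_{ij}\le y'_i$ as a single invariant through every Move and through the merge, whereas the paper's appendix argues the terminal case by observing that at termination every surviving location has $y'_i\ge 1$ so that $x'_{ij}\le 1\le y'_i$ suffices; both arguments are sound.
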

\begin{proof}
Property \textbf{2a.} If $M_l$ is a non-terminal cluster, i.e., $0<Z_l<1$,
then we will move the amount of each location in $M_l$ to its core $l$ according to Procedure 3.
Consequently, we obtain $\frac{\alpha-1}{\alpha}\leq y'_l=Z_l<1$ (property \textbf{1c}) and $y'_j=0, \forall j\in M_l-\{l\}.$

If $M_l$ is a terminal cluster, i.e., $Z_l\geq 1$, then according to Lemma \ref{lemma:last step of concentrating a terminal cluster}
we get $y'_{j_t}=1$ for each $t<a$ and $y'_{j_s}=0$ for each $s>a$ if $j_a$ exists and $j_b$ does not exist.
Then, we move the amount of $y'_{j_a}$ to $y'_{j_{a-1}}$. So, $1\leq y'_{j_{a-1}}<2$ as $0\leq y'_{j_a}<1$.
Note that if $j_a$ does not exist, then we know $y'_j=1$ for each $j\in M_l$.

Thus, for all $i\in N$, $\frac{\alpha-1}{\alpha}\leq y'_i<2$ or $y'_i=0$. $\sum_{j\in N} d_j x'_{ij}\leq M y'_i, \forall i\in N$ hold by Lemma \ref{lemma:Move} (Note that it is easy to check these inequalities still hold after the step 4 in Procedure 3).

Property \textbf{2b.} This directly follows by Lemma \ref{lemma:Move}(1).

Property \textbf{2c.} We give a brief idea here and see Appendix \ref{appendix:Concentrate} for details.
Observe that for each $j\in N$, we always set $x'_{ij}:=0$ if $y'_i$ is already set to be 0.
For each non-terminal cluster, only the core has a positive opening value after this step.
And in the procedure the opening value of core is always increased by a bigger amount than the increasing of the fraction of the demand served by the core.
For a terminal cluster, each location $i$ in the cluster has $y'_i=0$ or $y'_i\geq 1$ after this step.
Note that for each location $i\in N$ with $y'_i\geq 1$, $x'_{ij}\leq y'_i$ holds for each $j\in N$ as $x'_{ij}\leq 1.$
\end{proof}

Since each location is not far away from its cluster core, these transfer operations would not increase too much extra cost.
\begin{lemma}\label{lemma:extra cost of step 2}
 (1). Let $M_l$ be a non-terminal cluster. The demand of location $j$ originally served by $j_b$($j_b\in M_l$) must be served by core $l$ after the procedure. And we have $c_{l j}\leq 2c_{j_b j}+ 2\alpha C_j.$

 (2). Let $M_l$ be a terminal cluster. If we move the demand of location $j$ served by $j_b$ to $j_a$ ($j_a, j_b\in M_l$, $a<b$), we have $c_{j_a j}\leq 3c_{j_b j}+4\alpha C_j.$ \textup{(See Appendix \ref{appendix:extra cost of step 2} for the proof.)}
\end{lemma}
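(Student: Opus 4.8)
The plan is to reduce both parts to a single auxiliary inequality controlling the distance from a cluster core $l$ to any member $j_b$ of its cluster $M_l$, expressed in terms of the location $j$ whose demand $j_b$ serves. Concretely, I would first establish the key claim that for every $j_b\in M_l$ and every location $j$ one has $c_{l j_b}\le 2\alpha C_j + c_{j_b j}$. The proof of this claim rests on two facts: since $j_b\in M_l$, the core $l=N'(j_b)$ is the nearest cluster core to $j_b$, so $c_{l j_b}\le c_{N'(j)\,j_b}$ for the (possibly different) core $N'(j)$ nearest to $j$; and then the triangle inequality together with property \textbf{1a} applied to $j$ gives $c_{N'(j)\,j_b}\le c_{N'(j)\,j}+c_{j j_b}\le 2\alpha C_j + c_{j_b j}$. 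Combining the two yields the claim.

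For part (1), I would first note that in a non-terminal cluster ($Z_l<1$) Procedure 3 moves all opening value and all served demand to the core $l$, so the demand of $j$ formerly served by $j_b$ is indeed reassigned to $l$; this was already observed in the proof of property \textbf{2a}. Then a single application of the triangle inequality through $j_b$ gives $c_{lj}\le c_{l j_b}+c_{j_b j}$, and substituting the key claim produces $c_{lj}\le 2\alpha C_j + 2c_{j_b j}$, which is exactly the desired bound.

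For part (2), the demand of $j$ is moved from $j_b$ to an earlier location $j_a$ in the cluster ordering, where $a<b$. Since the locations of $M_l$ are sorted by nondecreasing distance $c_{l j_t}$ to the core, I have $c_{l j_a}\le c_{l j_b}$, and this is the step that keeps the constant small. Routing the distance $c_{j_a j}$ through the core $l$ and then through $j_b$ gives $c_{j_a j}\le c_{l j_a}+c_{l j_b}+c_{j_b j}\le 2c_{l j_b}+c_{j_b j}$, after which one application of the key claim to $c_{l j_b}$ yields $c_{j_a j}\le 2(2\alpha C_j + c_{j_b j})+c_{j_b j}=4\alpha C_j + 3c_{j_b j}$, as claimed.

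The main obstacle I anticipate is controlling the accumulation of constants in part (2): a naive bound that first estimates $c_{lj}$ and then adds $c_{l j_a}$ separately loses an extra factor and produces a coefficient of $5$ rather than $3$ in front of $c_{j_b j}$. The decisive trick is to invoke the key claim only once, on $c_{l j_b}$, and to dispose of $c_{l j_a}$ for free using the monotonicity $c_{l j_a}\le c_{l j_b}$ coming from the cluster ordering, rather than bounding the two locations $j_a$ and $j_b$ independently.
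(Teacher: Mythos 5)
Your proposal is correct and follows essentially the same route as the paper's proof: both parts rest on the inequality $c_{l j_b}\le c_{N'(j)\,j_b}\le c_{N'(j)\,j}+c_{j j_b}\le 2\alpha C_j+c_{j_b j}$ (which you isolate as a key claim and the paper applies inline), and part (2) uses the cluster ordering to get $c_{l j_a}\le c_{l j_b}$ exactly as the paper does. No gaps.
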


Let $N_1=\{i\in N \mid y'_i\geq 1\}$ be the collection of locations with the opening value at least 1. Let $N_2=\{i\in N \mid y'_i\in [\frac{\alpha-1}{\alpha},1)\}$ be the collection of locations with fractional opening value in $[\frac{\alpha-1}{\alpha},1)$. Note that $N_2$ can also be written as $\{i\in N' \mid Z_i\in [\frac{\alpha-1}{\alpha},1)\}.$ That is, $N_2$ is the collection of non-terminal cluster cores. Moreover, we have $N_1\cup N_2 \supseteq N'.$

\begin{lemma}\label{lemma:easy case}
If $|N_2|-1< \sum_{i\in N_2} y'_i$, we can get an integer solution with increasing the capacity by factor 2, by opening all locations in $N_1\cup N_2$ as centers. The total cost of the obtained solution can be bounded by $(3+4\alpha)C_{LP}$.
\end{lemma}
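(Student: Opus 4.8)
The plan is to produce the integer solution directly, bypassing the rounding of Steps 3--4. The key observation is that by property \textbf{2a} the set $N_1\cup N_2=\{i\in N\mid y'_i>0\}$ is exactly the support of the fractional opening vector $y'$, so every unit of demand is already served by a location of $N_1\cup N_2$ in the solution $(x',y')$. I would therefore set $y''_i=1$ for $i\in N_1\cup N_2$ and $y''_i=0$ otherwise, and keep the assignment $x''=x'$ unchanged; since demands are splittable, only $y''$ must be integral. This is the ``easy'' case precisely because no redistribution of $x'$ is required.

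First I would check the cardinality constraint. The sets $N_1$ and $N_2$ are disjoint, so $|N_1\cup N_2|=|N_1|+|N_2|$. Using $y'_i\ge 1$ on $N_1$ together with the hypothesis $|N_2|-1<\sum_{i\in N_2}y'_i$ and property \textbf{2b}, one gets
$$|N_1|+|N_2|-1<|N_1|+\sum_{i\in N_2}y'_i\le\sum_{i\in N}y'_i\le k.$$
As $|N_1|+|N_2|$ and $k$ are integers, this forces $|N_1\cup N_2|\le k$, so at most $k$ centers are opened. I would then verify feasibility of the assignment: $\sum_{i}x''_{ij}=1$ for every $j$ and $x''_{ij}\le 1=y''_i$ on the support follow from Lemma \ref{lemma:Move}(2) and the fact that $x'_{ij}=0$ whenever $y'_i=0$ (so no demand is routed to a closed location). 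For the capacities, property \textbf{2a} gives $\sum_{j}d_jx'_{ij}\le My'_i$, and since $y'_i<2$ for all $i$ we obtain $\sum_{j}d_jx''_{ij}\le My'_i<2M=2My''_i$; thus opening all of $N_1\cup N_2$ violates the capacities by a factor strictly less than $2$ (in fact not at all on $N_2$, where $y'_i<1$).

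It remains to bound the cost, which equals $\sum_{i,j}d_jc_{ij}x'_{ij}$ because the assignment is untouched. The main work is here, and it is exactly what Lemma \ref{lemma:extra cost of step 2} is designed for. I would argue by coupling the final assignment back to the optimal LP assignment $x$: each unit of the demand of $j$ served by $i$ in $x$ is transported, within its own cluster, to some location $i'$ serving it in $x'$, and by Lemma \ref{lemma:extra cost of step 2}, using the weaker terminal bound $c_{i'j}\le 3c_{ij}+4\alpha C_j$ (which also dominates the non-terminal bound), its cost is controlled. Summing over the transported mass and using $\sum_i x_{ij}=1$ and $\sum_i c_{ij}x_{ij}=C_j$ yields, for each $j$,
$$\sum_{i'}d_jc_{i'j}x'_{i'j}\le\sum_i d_j(3c_{ij}+4\alpha C_j)x_{ij}=(3+4\alpha)d_jC_j,$$
and summing over $j\in N$ gives $\sum_{i,j}d_jc_{ij}x'_{ij}\le(3+4\alpha)C_{LP}$.

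The delicate point, and the one I would be most careful about, is justifying the per-unit coupling. The Move operations redistribute a location's entire served demand proportionally and may relocate a given unit through a chain of intermediate locations inside a cluster, so one must confirm that Lemma \ref{lemma:extra cost of step 2} really bounds the cost of the \emph{final} server against that of the \emph{original} server, with the factor $3$ (rather than $2$) absorbing this chaining in the terminal case. Once that coupling is in place the three ingredients---cardinality, factor-$2$ capacity, and the $(3+4\alpha)C_{LP}$ cost---combine to give the claim.
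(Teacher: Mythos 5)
Your proposal is correct and follows essentially the same route as the paper: open all of $N_1\cup N_2$, keep the assignment $x'$, use property \textbf{2b} with the integrality of $|N_1|+|N_2|$ to get the cardinality bound (the paper phrases this via $\lceil\cdot\rceil+\lfloor\cdot\rfloor=k$, which is equivalent), use property \textbf{2a} for the factor-$2$ capacity, and invoke Lemma \ref{lemma:extra cost of step 2} for the $(3+4\alpha)C_{LP}$ cost. Your flagged concern about chaining in the Move operations is legitimate but resolves exactly as you suspect, since the bound in Lemma \ref{lemma:extra cost of step 2} depends only on the final server being closer to the cluster core than the original one.
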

\begin{proof}
If $|N_2|-1< \sum_{i\in N_2} y'_i$, then $|N_2|= \lceil \sum_{i\in N_2} y'_i \rceil$ as $y'_i<1$ for each $i\in N_2$. Additionally, since $\sum_{i\in N_1} y'_i\leq k-\sum_{i\in N_2} y'_i$ (by property \textbf{2b}) and $y'_i\geq 1$ for each $i\in N_1$, we have $|N_1|\leq \lfloor k-\sum_{i\in N_2} y'_i \rfloor.$

Thus, if we only open locations in $N_1\cup N_2$, then we open at most $k$ centers as $\lceil \sum_{i\in N_2} y'_i \rceil+\lfloor k-\sum_{i\in N_2} y'_i \rfloor=k$.

Since $y'_i=0$ for each $i\notin N_1\cup N_2$, we have $\sum_{i\in N_1\cup N_2} x'_{ij}=1, \forall j\in N$ by Lemma \ref{lemma:Move}(2) and property \textbf{2c.} That is, $\sum_{i\in N_1\cup N_2} d_j x'_{ij}=d_j$ for each $j\in N.$ Thus, the demand of each $j\in N$ can be satisfied by assigning $d_j x'_{ij}$ to $i\in N_1\cup N_2$.

By Lemma \ref{lemma:extra cost of step 2}, it is easy to see the total cost of the obtained solution can be bounded by $(3+4\alpha)C_{LP}$. By Lemma \ref{lemma:Concentrate}, we know for all $i\in N$, $\frac{\alpha-1}{\alpha}\leq y'_i<2$ or $y'_i=0$; and $\sum_{j\in N} d_j x'_{ij}\leq M y'_i$. So, we increase the capacity by at most a factor of 2.
\end{proof}

From now on, we only consider the following case.
\begin{assumption}\label{ass:nontrivial case}
$\sum_{i\in N_2} y'_i\leq |N_2|-1.$
\end{assumption}

\begin{definition}\label{def:new demands}
We define new demands $d'$ as follows. For $i\in N$, set $d'_i:=\sum_{j\in N}{d_j x'_{ij}}.$ (Note that $d'_i=0$ for each $i\in N-(N_1\cup N_2).$)
\end{definition}

\subsection{Step 3: Obtaining a $\{(\frac{\alpha-2}{\alpha},\frac{\alpha-1}{\alpha}], [1,2)\}$-solution}
For each $i\in N_2$, let $s(i)$ be the nearest location to $i$ in $(N_1\cup N_2)-\{i\}$ (break ties arbitrarily).
Let $Y=\sum_{i\in N_2} {y'_i}$.
Note that we only consider the case: $Y\leq |N_2|-1$ by Assumption \ref{ass:nontrivial case}.
After this step we will obtain a solution $(x',\hat{y})$ with $\frac{\alpha-2}{\alpha}< \hat{y}_i\leq \frac{\alpha-1}{\alpha}$, or $1\leq \hat{y}_i< 2,$ or $\hat{y}_i=0$ for each $i\in N$.

In this step, initially we order all locations in $N_2$ in nondecreasing order of $d'_i c_{s(i) i}$.
Without loss of generality, suppose we get an order $i_1,\cdots,i_v$.

Then, for each $i\in N-N_2$, set $\hat{y}_i:=y'_i$. For each $i\in N_2$, set $\hat{y}_i:= \frac{\alpha-1}{\alpha}$. Let $Y':=Y-\sum_{i\in N_2} {\hat{y}_i}$. Then, perform the following operations:\\
\begin{procedure}[H]
\caption{4. Determine new opening values for $N_2$($Y\leq |N_2|-1$)}
 \For{$r=v$ to 1}
 {
  \If{$Y'=0$}
  {
   terminate\;
  }
  \If{$Y'>0$ and $Y'+\hat{y}_{i_r}<1$}
  {
   set $\hat{y}_{i_1}:=\hat{y}_{i_1}-(1-Y'-\hat{y}_{i_r}), \hat{y}_{i_r}:=1$\;
   terminate\;
  }
  \If{$Y'>0$ and $Y'+\hat{y}_{i_r}\geq 1$}
  {
   set $\hat{y}_{i_r}:=1$ and update $Y':=Y-\sum_{i\in N_2}{\hat{y}_i}$\;
  }
 }
\end{procedure}
\begin{remark}\label{remark:step 3 terminates}
The Procedure 4 terminates at $r>1$. If the procedure terminates at $r=1$, then we get $Y=\sum_{t=1}^v y'_{i_t}> |N_2|-1,$ a contradiction. See Appendix \ref{appendix:step 3 terminates} for the details of argument.
\end{remark}

\begin{lemma}\label{lemma:Determine new opening values}
After the above procedure, we have the following properties

 \textup{[\textbf{3a}]}. for all $i\in N$, $\frac{\alpha-2}{\alpha}< \hat{y}_i\leq \frac{\alpha-1}{\alpha}$, or $1\leq \hat{y}_i< 2,$ or $\hat{y}_i=0;$ and only $\hat{y}_{i_1}$ can be in $(\frac{\alpha-2}{\alpha},\frac{\alpha-1}{\alpha})$, i.e., $|\{i\in N \mid \frac{\alpha-2}{\alpha}< \hat{y}_i < \frac{\alpha-1}{\alpha}\}|\leq 1;$

 \textup{[\textbf{3b}]}. for any location $i\in N$, if $\frac{\alpha-2}{\alpha}< \hat{y}_i\leq \frac{\alpha-1}{\alpha}$, then  $d'_i=\sum_{j\in N} d_j x'_{ij}\leq M;$

 \textup{[\textbf{3c}]}. for any location $i\in N$, if $1\leq \hat{y}_i< 2$, then  $d'_i=\sum_{j\in N} d_j x'_{ij}\leq M \hat{y}_i;$

 \textup{[\textbf{3d}]}. $\sum_{i\in N_2}{\hat{y}_i}=\sum_{i\in N_2}{y'_i};$ $\sum_{i\in N}{\hat{y}_i}=\sum_{i\in N}{y'_i}\leq k;$

 \textup{[\textbf{3e}]}. $\sum_{i\in N_2}(1-\hat{y}_i)d'_i c_{s(i) i} \leq \sum_{i\in N_2}(1-y'_i)d'_i c_{s(i) i}.$
\end{lemma}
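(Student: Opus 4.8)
My plan is to verify the five properties by directly tracing Procedure~4 and combining its effect with the guarantees of Lemma~\ref{lemma:Concentrate} (properties \textbf{2a}, \textbf{2b}). The conceptual heart of the lemma is [\textbf{3e}]; the other four are essentially bookkeeping. For [\textbf{3a}] I would first observe that every $i\in N-N_2$ keeps $\hat{y}_i=y'_i$, which by \textbf{2a} lies in $\{0\}\cup[1,2)$, already inside the claimed set. For $i\in N_2$, the procedure initializes $\hat{y}_i=\frac{\alpha-1}{\alpha}$ and, sweeping $r$ from $v$ down, only ever raises an entry to exactly $1$ (the two branches that set $\hat{y}_{i_r}:=1$) or, at termination, lowers $\hat{y}_{i_1}$ by $1-Y'-\hat{y}_{i_r}=\frac{1}{\alpha}-Y'$. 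By Remark~\ref{remark:step 3 terminates} the sweep stops at some $r>1$, so $i_1$ is never the processed index $i_r$ and is touched at most once, giving $\hat{y}_{i_1}=\frac{\alpha-2}{\alpha}+Y'$ with $0<Y'<\frac{1}{\alpha}$, i.e. $\hat{y}_{i_1}\in(\frac{\alpha-2}{\alpha},\frac{\alpha-1}{\alpha})$, while every other $i\in N_2$ ends at $\frac{\alpha-1}{\alpha}$ or $1$. This yields the three-interval classification and the uniqueness of the entry strictly below $\frac{\alpha-1}{\alpha}$.

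Properties [\textbf{3b}] and [\textbf{3c}] then follow from \textbf{2a}. Any $i$ with $\hat{y}_i\in(\frac{\alpha-2}{\alpha},\frac{\alpha-1}{\alpha}]$ must lie in $N_2$ (entries outside $N_2$ are $0$ or $\geq 1$), whence $y'_i<1$ and $d'_i=\sum_{j\in N} d_j x'_{ij}\leq My'_i<M$. For [\textbf{3c}] I split into $i\in N-N_2$, where $\hat{y}_i=y'_i$ and \textbf{2a} gives $d'_i\leq My'_i=M\hat{y}_i$, and $i\in N_2$ raised to $\hat{y}_i=1$, where $d'_i\leq My'_i<M=M\hat{y}_i$. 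For [\textbf{3d}] I would track the invariant $Y'=Y-\sum_{i\in N_2}\hat{y}_i$, which the branch that updates $Y'$ preserves exactly and which the terminating branches leave at value realizing $\sum_{i\in N_2}\hat{y}_i=Y$ (a short check of the net change $\frac{1}{\alpha}-(\frac{1}{\alpha}-Y')=Y'$ in the second branch); combined with $\hat{y}_i=y'_i$ off $N_2$ and \textbf{2b} this gives $\sum_{i\in N}\hat{y}_i=\sum_{i\in N}y'_i\leq k$.

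The main work is [\textbf{3e}]. Writing $\Delta_t:=\hat{y}_{i_t}-y'_{i_t}$ and $w_t:=d'_{i_t}c_{s(i_t)i_t}$, the claim is equivalent to $\sum_{t=1}^{v} w_t\Delta_t\geq 0$, and [\textbf{3d}] gives $\sum_{t=1}^{v}\Delta_t=0$. The decisive structural fact, read off from the downward sweep, is a sign threshold: the indices raised to $1$ form a top block $i_{p+1},\dots,i_v$ (so $\Delta_t=1-y'_{i_t}>0$ there), while the remaining $i_1,\dots,i_p$ stay at $\frac{\alpha-1}{\alpha}$ or (for $i_1$) are lowered, giving $\Delta_t\leq 0$. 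Since the $w_t$ are nondecreasing by the chosen ordering, I would finish with a Chebyshev-type manipulation: choosing the threshold weight $W=w_p$ and using $\sum_t\Delta_t=0$ to write $\sum_t w_t\Delta_t=\sum_t(w_t-W)\Delta_t$, each summand is nonnegative because $w_t-W$ and $\Delta_t$ share a sign across the threshold.

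The main obstacle is precisely justifying that the raised entries form a contiguous top block with a clean sign split — that no low-index entry is ever raised to $1$. This requires arguing that the sweep assigns value $1$ strictly from the top down until $Y'$ is exhausted, so that exactly a suffix $i_{p+1},\dots,i_v$ of the ordering is raised; once that monotone block structure is pinned down, the sign pattern of $\Delta_t$ and hence the inequality are immediate.
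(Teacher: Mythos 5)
Your proposal is correct and follows essentially the same route as the paper: \textbf{3a}--\textbf{3d} by direct tracing of Procedure~4 together with property \textbf{2a}, and \textbf{3e} by exploiting the ordering of $N_2$ by $d'_i c_{s(i)i}$ so that opening mass only moves toward larger-weight indices (the paper phrases this as incremental transfers increasing $\sum_i \hat{y}_i d'_i c_{s(i)i}$, which is the same Chebyshev-type observation as your sign-split at the threshold). The ``main obstacle'' you flag is not one: since the loop runs $r=v$ down to $1$ and every processed index before termination is set to $1$, the raised indices are by construction a suffix $i_{r^*},\dots,i_v$ with $r^*>1$ by Remark~\ref{remark:step 3 terminates}.
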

\begin{proof}
We give some brief ideas here. See Appendix \ref{appendix:Determine new opening values} for the details.

Property \textbf{3a}. For each location $i\in N-N_2$, we set $\hat{y}_i:=y'_i$. So, $1\leq \hat{y}_i< 2$ for each $i\in N_1$; $\hat{y}_i=0$ for each $i\in N-(N_1\cup N_2)$.

For each location $i\in N_2$, initially we set $\hat{y}_{i}:=\frac{\alpha-1}{\alpha}$. In the Procedure 4, only $\hat{y}_{i_1}$ could be decreased by a number in $(0,\frac{1}{\alpha}).$ The opening value of other location in $N_2$ remains the same or is set to be $1$.

Property \textbf{3b, 3C.} Notice that if for location $i$ we have $\frac{\alpha-2}{\alpha}< \hat{y}_i\leq \frac{\alpha-1}{\alpha}$ after the procedure, then we know $\frac{\alpha-1}{\alpha}\leq y'_i<1.$
And if $1\leq  \hat{y}_i<2$ for location $i$ after the procedure, then we have $y'_i\leq \hat{y}_i$.

We make no change on $x'$. Then, combining with property \textbf{2a}, we have if $\frac{\alpha-2}{\alpha}< \hat{y}_i\leq \frac{\alpha-1}{\alpha}$, then $\sum_{j\in N} d_j x'_{ij}\leq M y'_i< M.$ If $1\leq \hat{y}_i< 2$, then $\sum_{j\in N} d_j x'_{ij}\leq M y'_i\leq  M \hat{y}_i.$

Property \textbf{3d}. We move the opening value from one location to the other locations. We do not change the total opening value.
So, $\sum_{i\in N_2}{\hat{y}_i}=\sum_{i\in N_2}{y'_i}$ holds after the above process. Moreover, we set $\hat{y}_i:=y'_i$ for each $i\in N-N_2$. Thus, we also have $\sum_{i\in N}{\hat{y}_i}=\sum_{i\in N}{y'_i}\leq k.$

Property \textbf{3e}. We always transfer the opening value from
$i_a$ to $i_b$, where $a<b$ and $d'_{i_b} c_{s({i_b}) {i_b}}\geq d'_{i_a} c_{s({i_a}) {i_a}}$.
Therefore, $\sum_{i\in N_2} \hat{y}_i d'_i c_{s(i) i} \geq \sum_{i\in N_2} y'_i d'_i c_{s(i) i}.$ Then, we have $\sum_{i\in N_2}(1-\hat{y}_i)d'_i c_{s(i) i} \leq \sum_{i\in N_2}(1-y'_i)d'_i c_{s(i) i}.$
\end{proof}

\subsection{Step 4: Rounding to an Integral Solution}
Let $\hat{N}_1=\{i\in N \mid 2> \hat{y}_i\geq 1\}$ be the set of locations with opening value greater than or equal to 1.
Let $\hat{N}_2=\{i\in N \mid \frac{\alpha-2}{\alpha}< \hat{y}_i\leq \frac{\alpha-1}{\alpha}\}$ be the set of location with fractional opening value strictly less than 1. Let $L_1=|\hat{N}_1|$. Note that $N_1\cup N_2=\hat{N}_1\cup \hat{N}_2,$ and $\hat{N}_2\subseteq N_2.$

In this step, we aim to construct an integral solution $(\bar{x},\bar{y})$ with $\sum_{j\in N} \bar{x}_{ij}d'_j\leq (2+\frac{2}{\alpha})M \bar{y}_i$ for each $i\in N$.
If location $j$ is opened as a center, we serve the demand $d'_j$ of location $j$ by itself. That is, set $\bar{x}_{jj}:=1, \bar{x}_{ij}:=0$ for each $i\neq j, i\in N.$
And we build a center at location $i$ if $1\leq \hat{y}_i< 2$, i.e., set $\bar{y}_{i}:=1$ for each $i\in \hat{N}_1$.
For $\hat{N}_2$, we will open at most $k-L_1$ locations as centers.
If a center is not opened at location $j\in \hat{N}_2$, we assign the demand $d'_j$ of $j$ to another opened center $i$, i.e., set $\bar{x}_{ij}:=1$.
Now we start to show the details of this step.

Initially, for each $i,j\in N$ set $\bar{x}_{ij}:=0;$ and $\bar{y}_{i}:=0$.
Then, we construct a collection of rooted trees spanning the locations in $\hat{N}_2$ as in \cite{CharikarGTS}.
Recall that $s(i)$ is the closest location to $i$ in $(\hat{N}_1\cup \hat{N}_2)-\{i\}$ ($N_1\cup N_2=\hat{N}_1\cup \hat{N}_2$) for each $i\in N_2$.
We draw a directed edge from $i$ to $s(i)$ if $i\in \hat{N}_2$.
The cycles can be eliminated by the following way.
For each cycle, we take any location in this cycle as a root and delete the edge from this root to other location.
If there is a directed edge from $i$ to $s(i)$ finally, we consider $s(i)$ as the parent of $i$. Then, we get a desired collection of rooted trees.

Next, we decompose each tree into a collection of rooted stars by the following procedure.
\begin{remark}\label{remark:star}
In each rooted star, all the children of the root have a fractional opening value.
If the root of a star is a fractionally opened location, then the root has at least one child.
\end{remark}
\begin{procedure}[H]
\caption{5. Decompose a tree $T$ to stars()}
\While{there are at least two nodes in $T$}
 {
  choose a leaf node $i$ with biggest number of edges on the path from $i$ to the root\;
  consider the subtree rooted at $s(i)$ as a rooted star, and remove this subtree\;
 }
\If{only one node $i$ is left and $0<\hat{y}_i<1$}
 {
  add $i$ to the star rooted at $s(i)$ as a child of $s(i)$\;
 }
\end{procedure}

\begin{definition}
An even star is a star with even number of children. An odd star is a star with odd number of children.
\end{definition}

Let $Q_t$ denote the star rooted at location $t$. By abuse of notation, we also use $Q_t$ to denote the collection of locations in the star rooted at $t$. Let $R_t=\sum_{i\in Q_t} \hat{y}_i$ be the total opening value in $Q_t$.

\begin{lemma}\label{lemma:total fractional opening value}
(1) If a star $Q_t$ has even number of fractionally opened locations, i.e., $|Q_t\cap \hat{N}_2|=2q$ is an even number ($q\in \mathbb{Z}^+$), then the total opening value of these fractionally opened locations is greater than $q$, i.e., $\sum_{i\in Q_t\cap \hat{N}_2} \hat{y}_i > q.$

(2) If $|Q_t\cap \hat{N}_2|=2q+1$ is an odd number and $q\in \mathbb{Z}^+$, then $\sum_{i\in Q_t\cap \hat{N}_2} \hat{y}_i> q+1.$
\end{lemma}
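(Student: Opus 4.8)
The plan is to reduce both parts to a single lower bound on $S := \sum_{i \in Q_t \cap \hat{N}_2} \hat{y}_i$ expressed through $n_2 := |Q_t \cap \hat{N}_2|$, the number of fractionally opened locations in the star, and then specialize to $n_2 = 2q$ and $n_2 = 2q+1$. The structure of the star (root, children, even/odd) plays no role; the only facts I would use are the two defining features of $\hat{N}_2$: every such location satisfies $\frac{\alpha-2}{\alpha} < \hat{y}_i \le \frac{\alpha-1}{\alpha}$, and, by property \textbf{3a}, at most one location in all of $N$ (hence at most one in $Q_t$) has $\hat{y}_i$ strictly below $\frac{\alpha-1}{\alpha}$.

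First I would establish the uniform estimate $S > (n_2 - 1)\frac{\alpha-1}{\alpha} + \frac{\alpha-2}{\alpha} = n_2 - \frac{n_2+1}{\alpha}$. This holds because at most one of the $n_2$ summands can lie below $\frac{\alpha-1}{\alpha}$: if all equal $\frac{\alpha-1}{\alpha}$ the sum is $n_2\frac{\alpha-1}{\alpha}$, which already exceeds the right-hand side by $\frac{1}{\alpha}$; and if exactly one summand $s$ is smaller, then $s > \frac{\alpha-2}{\alpha}$ strictly by membership in $\hat{N}_2$, so $S = (n_2-1)\frac{\alpha-1}{\alpha} + s > (n_2-1)\frac{\alpha-1}{\alpha} + \frac{\alpha-2}{\alpha}$. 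In either case the inequality is strict, and the displayed algebraic simplification is routine.

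Then each part reduces to arithmetic using $\alpha \ge 4$ and $q \ge 1$. For part (1), with $n_2 = 2q$, the estimate reads $S > 2q - \frac{2q+1}{\alpha}$; since $q(\alpha-2)\ge 1$ is equivalent to $\frac{2q+1}{\alpha}\le q$, I conclude $S > 2q - q = q$. For part (2), with $n_2 = 2q+1$, the estimate reads $S > (2q+1) - \frac{2q+2}{\alpha}$; here $q(\alpha-2)\ge 2$ is equivalent to $\frac{2q+2}{\alpha}\le q$, so $S > (2q+1) - q = q+1$. Both required conclusions follow.

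The step I expect to carry the weight is the uniform estimate, specifically its reliance on property \textbf{3a}. A naive bound using only $\hat{y}_i > \frac{\alpha-2}{\alpha}$ gives $S > n_2\frac{\alpha-2}{\alpha}$, which suffices for the even case (as $2q\frac{\alpha-2}{\alpha}\ge q$ whenever $\alpha\ge 4$) but is too weak for the odd case: at $\alpha = 4$ it yields only $S > q+\frac{1}{2}$, short of the required $q+1$. The extra $\frac{1}{\alpha}$-per-location slack obtained by pinning all but one opening value at exactly $\frac{\alpha-1}{\alpha}$ is precisely what closes this gap, so the crux is invoking the ``at most one strictly-fractional location'' guarantee rather than any feature of the star itself.
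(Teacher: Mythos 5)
Your proof is correct and follows essentially the same route as the paper: both arguments bound the sum below by $\frac{\alpha-2}{\alpha}+(n_2-1)\frac{\alpha-1}{\alpha}$ using property \textbf{3a} (at most one location of $\hat{N}_2$ lies strictly below $\frac{\alpha-1}{\alpha}$) and then finish with the arithmetic $\alpha\ge 4$, $q\ge 1$. The only cosmetic difference is that you state the estimate once in terms of $n_2=|Q_t\cap\hat{N}_2|$ and specialize, whereas the paper writes out the two parities separately; your closing observation that the naive bound $n_2\frac{\alpha-2}{\alpha}$ fails for the odd case at $\alpha=4$ correctly identifies why property \textbf{3a} is indispensable.
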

The proof for the above lemma is given in Appendix \ref{appendix:total fractional opening value}.
The proof is based on the fact that at most one location $i\in Q_t\cap \hat{N}_2$ has the opening value
in $(\frac{\alpha-2}{\alpha}, \frac{\alpha-1}{\alpha})$ and the opening value of other location
in $Q_t\cap \hat{N}_2$ is exactly $\frac{\alpha-1}{\alpha}$ (property \textbf{3a}).

We build a center at each location $i\in \hat{N}_1- \bigcup_t Q_t$ (locations are in $\hat{N}_1$, but not in any star), i.e., set $\bar{y}_{i}:=1$ and $\bar{x}_{ii}:=1$.
For each kind of star $Q_t$, we define operations to make sure at most $\lfloor R_t \rfloor$ locations in $Q_t$ are selected to be centers.

\textbf{1. An even star rooted at location $t$ with $1\leq \hat{y}_t<2$.}
Let $i_1,\cdots,i_{2q}$ be a sequence of all its children in nondecreasing order of distance from $t$.
We build centers at location $t, i_1,i_3,\cdots,i_{2q-1}$, and serve the demand $d'_{i_{2r}}$ of $i_{2r}$ by opened location $i_{2r-1}$, i.e.,
\begin{alignat}{2}
\text{set } &\bar{y}_{t}:=1; \quad \bar{y}_{i_{2r-1}}:=1, \bar{y}_{i_{2r}}:=0, r=1,\cdots,q; \nonumber\\
\text{set } &\bar{x}_{tt}:=1; \quad \bar{x}_{i_{2r-1} i_{2r-1}}:=1, \bar{x}_{i_{2r-1} i_{2r}}:=1, r=1,\cdots,q.\nonumber
\end{alignat}

\textbf{2. An even star rooted at location $t$ with $\frac{\alpha-2}{\alpha}< \hat{y}_t\leq \frac{\alpha-1}{\alpha}$.}
Let $i_1,\cdots,i_{2q}$ be a sequence of all its children in nondecreasing order of distance from $t$. (Note that $q\geq 1$ by Remark \ref{remark:star}.)
We build centers at location $t,i_2,i_4,\cdots,i_{2q}$, and serve the demand $d'_{i_{2r+1}}$ of $i_{2r+1}$ by opened location $i_{2r}$, serve the demand $d'_{i_{1}}$ of $i_{1}$ by $t$.


\textbf{3. An odd star rooted at location $t$ with $1+\frac{2}{\alpha}\leq \hat{y}_t<2$.}
Let $i_1,\cdots,i_{2q+1}$ be a sequence of all its children in nondecreasing order of distance from $t$.
We open $t,i_1,i_3,\cdots,i_{2q+1}$ as centers, and serve the demand $d'_{i_{2r}}$ of $i_{2r}$ by opened location $i_{2r-1}$.

\textbf{4. An odd star rooted at location $t$ with $\frac{\alpha-2}{\alpha}< \hat{y}_t\leq \frac{\alpha-1}{\alpha}$ or $1\leq \hat{y}_t< 1+\frac{2}{\alpha}$.}
Let $i_1,\cdots,i_{2q+1}$ be a sequence of all its children in nondecreasing order of distance from $t$.
We build centers at location $t,i_2,i_4,\cdots,i_{2q}$, and serve the demand $d'_{i_{2r+1}}$ of $i_{2r+1}$ by opened location $i_{2r}$, serve the demand $d'_{i_{1}}$ of $i_{1}$ by $t$.

Note that $(\bar{x},\bar{y})$ is an integral solution for new demands $d'$.
To get an integral solution for our original demands $d$, we can redistribute the demands $d'$ to their original locations according to Definition \ref{def:new demands}.

By property \textbf{3a}, \textbf{3b} and \textbf{3c}, and Lemma \ref{lemma:total fractional opening value}, we can get the following lemma.
\begin{lemma}\label{lemma:the number of centers of each star}
For each kind of star $Q_t$, we build at most $\lfloor R_t \rfloor$ centers. And for each $i\in N$, we have $\sum_{j\in N}{d'_j \bar{x}_{ij}}\leq (2+\frac{2}{\alpha})M \bar{y}_i$. \textup{(See Appendix \ref{appendix:the number of centers of each star} for the details of proof.)}
\end{lemma}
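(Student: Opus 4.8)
The plan is to check both conclusions---the per-star center count and the per-location capacity bound---by going through the four star types one at a time, using Lemma~\ref{lemma:total fractional opening value} as the only nontrivial quantitative input for the counts and properties~\textbf{3b}, \textbf{3c} for the capacities. Throughout I would write $R_t=\hat{y}_t+\sum_{i\in Q_t\cap\hat{N}_2}\hat{y}_i$, which is legitimate because all children of a star lie in $\hat{N}_2$ (Remark~\ref{remark:star}).

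For the center count, in each case I would simply tally the locations set to $\bar{y}=1$ and compare with $\lfloor R_t\rfloor$. In the even cases we open $q+1$ centers: when the root is in $\hat{N}_1$ (Case~1) the star has $2q$ fractional locations, so Lemma~\ref{lemma:total fractional opening value}(1) gives $\sum>q$ and $R_t>1+q$; when the root is fractional (Case~2) there are $2q+1$ fractional locations and Lemma~\ref{lemma:total fractional opening value}(2) gives $R_t>q+1$; either way $\lfloor R_t\rfloor\ge q+1$. In the odd cases, Case~3 opens $q+2$ centers and, from $\sum>q+1$ together with $\hat{y}_t\ge 1+\tfrac{2}{\alpha}$, one gets $R_t>q+2+\tfrac{2}{\alpha}$, so $\lfloor R_t\rfloor\ge q+2$; Case~4 opens only $q+1$ centers, and the matching bound ($R_t>q+1$ if the root is fractional, $R_t>q+2$ if $1\le\hat{y}_t<1+\tfrac{2}{\alpha}$) is comfortably enough.

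For the capacity bound I would group the opened centers by the demand $\bar{x}$ routes to them and bound each contribution with $d'_i\le M$ for $i\in\hat{N}_2$ (property~\textbf{3b}) and $d'_i\le M\hat{y}_i$ for $i\in\hat{N}_1$ (property~\textbf{3c}); centers with $\bar{y}_i=0$ receive nothing, so the inequality is trivial there. Three patterns arise. A center serving only its own demand (the root in Cases~1 and~3, the trailing child $i_{2q}$ in Case~2, and every $i\in\hat{N}_1-\bigcup_tQ_t$) has load $d'_i\le M\hat{y}_i<2M$ or $d'_i\le M$, hence below $(2+\tfrac{2}{\alpha})M$. An opened $\hat{N}_2$ center that also absorbs one further $\hat{N}_2$ location (the paired children $i_{2r-1},i_{2r}$ or $i_{2r},i_{2r+1}$) carries at most $M+M=2M$. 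The only delicate contribution is a root that absorbs its nearest child $i_1$: when the root is fractional (Case~2, and Case~4 with fractional root) the load is $d'_t+d'_{i_1}\le 2M$, while in Case~4 with a root $t\in\hat{N}_1$ it is $d'_t+d'_{i_1}\le M\hat{y}_t+M<M(1+\tfrac{2}{\alpha})+M=(2+\tfrac{2}{\alpha})M$, which is exactly the claimed factor.

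The main obstacle is the odd-star dichotomy around the threshold $\hat{y}_t=1+\tfrac{2}{\alpha}$, which must be engineered so that precisely one of the two bounds stays tight. A heavy root ($\hat{y}_t\ge1+\tfrac{2}{\alpha}$) cannot be allowed to absorb a child---its load could approach $3M$---so it must serve only itself while all $q+1$ odd-indexed children are opened, costing $q+2$ centers; certifying $\lfloor R_t\rfloor\ge q+2$ here is exactly where the extra slack $+\tfrac{2}{\alpha}$ in $\hat{y}_t$, combined with the strict inequality of Lemma~\ref{lemma:total fractional opening value}(2), is spent. A light root ($\hat{y}_t<1+\tfrac{2}{\alpha}$) is instead allowed to absorb $i_1$, affordable precisely because $M\hat{y}_t+M<(2+\tfrac{2}{\alpha})M$, which frees a center and brings the count down to $q+1$. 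Once this balance is seen, the even stars and the unattached $\hat{N}_1$ locations follow routinely from the same two lemmas.
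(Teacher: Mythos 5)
Your overall strategy---a case analysis over the four star types, with Lemma~\ref{lemma:total fractional opening value} supplying the center counts and properties \textbf{3b}, \textbf{3c} the capacities---is exactly the paper's, and your capacity analysis is correct and complete: in particular you correctly isolate the one delicate load, a root $t\in\hat N_1$ with $1\le\hat y_t<1+\frac{2}{\alpha}$ absorbing $i_1$, bounded by $M\hat y_t+M<(2+\frac{2}{\alpha})M$, and you correctly note that a root with $\hat y_t\ge 1+\frac{2}{\alpha}$ must not absorb a child.

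The gap is in the center counts for odd stars with a single child, i.e.\ $q=0$. Lemma~\ref{lemma:total fractional opening value}(2) is stated only for $q\in\mathbb{Z}^+$, and it is genuinely false at $q=0$: a lone fractional child has $\hat y_{i_1}\le\frac{\alpha-1}{\alpha}<1=q+1$. Consequently your chain ``$\sum>q+1$, hence $R_t>q+2+\frac{2}{\alpha}$'' in Case~3, and the claim ``$R_t>q+2$ if $1\le\hat y_t<1+\frac{2}{\alpha}$'' in Case~4, both break down at $q=0$ (e.g.\ in Case~4 with $\hat y_t=1$ and $\hat y_{i_1}=\frac{\alpha-1}{\alpha}$ one has $R_t=2-\frac{1}{\alpha}<2$). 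The conclusions you need do still hold, but they require the separate direct computations the paper gives: for a single-child star in Case~3, $R_t=\hat y_t+\hat y_{i_1}>(1+\frac{2}{\alpha})+\frac{\alpha-2}{\alpha}=2$, so $\lfloor R_t\rfloor\ge 2=q+2$ --- and note this is precisely where the threshold $\hat y_t\ge 1+\frac{2}{\alpha}$ is spent on the \emph{counting} rather than the capacity, since for $q\ge 1$ the count already follows from $\hat y_t\ge 1$ alone; for Case~4 with $q=0$ one only needs $R_t>1$, which is immediate from property \textbf{3a}. Adding these boundary subcases closes the gap. (A minor slip: your identity $R_t=\hat y_t+\sum_{i\in Q_t\cap\hat N_2}\hat y_i$ double-counts $\hat y_t$ when the root is itself fractional; your subsequent use in Case~2 is nevertheless consistent with the correct $R_t=\sum_{i\in Q_t}\hat y_i$.)
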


\begin{lemma}\label{lemma:blowing up factor for capacity}
We build at most $k$ centers, and increase capacities by factor $2+\frac{2}{\alpha}.$
\end{lemma}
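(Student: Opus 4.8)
The plan is to treat the two claims separately: the capacity blow-up is essentially a restatement of Lemma \ref{lemma:the number of centers of each star}, while the cardinality bound is a counting argument that charges each opened center against the opening value $\hat{y}$ it consumes.

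For the capacity bound, I would observe that every opened center $i$ has $\bar{y}_i=1$, and Lemma \ref{lemma:the number of centers of each star} already gives $\sum_{j\in N} d'_j \bar{x}_{ij}\leq (2+\frac{2}{\alpha})M\bar{y}_i=(2+\frac{2}{\alpha})M$. Since the solution $(\bar{x},\bar{y})$ is integral for the new demands $d'$ and we recover a feasible assignment of the original demands $d$ by redistributing each $d'_j$ back to the locations contributing to it (Definition \ref{def:new demands}), the total amount of original demand routed to center $i$ equals exactly $\sum_{j\in N} d'_j \bar{x}_{ij}$. Hence every opened center serves at most $(2+\frac{2}{\alpha})M$ units of demand, which is the claimed violation factor.

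For the cardinality bound, I would partition the centers we open into those lying inside some star and those lying outside all stars. The locations we open outside the stars are exactly the set $\hat{N}_1-\bigcup_t Q_t$, one center each, so they contribute $|\hat{N}_1-\bigcup_t Q_t|$ centers; and for every star $Q_t$, Lemma \ref{lemma:the number of centers of each star} opens at most $\lfloor R_t\rfloor$ centers. Thus the total number of opened centers is at most $\sum_t \lfloor R_t\rfloor+|\hat{N}_1-\bigcup_t Q_t|$. I would then charge each term against $\hat{y}$: for a star, $\lfloor R_t\rfloor\leq R_t=\sum_{i\in Q_t}\hat{y}_i$; and for an isolated location $i\in\hat{N}_1-\bigcup_t Q_t$ we have $\hat{y}_i\geq 1$, so $|\hat{N}_1-\bigcup_t Q_t|\leq \sum_{i\in\hat{N}_1-\bigcup_t Q_t}\hat{y}_i$.

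The crucial structural fact — and the step I expect to require the most care — is that the stars are pairwise disjoint and together cover all of $\hat{N}_2$ (every fractionally opened location ends up in some star), while their roots lie in $\hat{N}_1\cup\hat{N}_2$. Consequently $\bigcup_t Q_t$ and $\hat{N}_1-\bigcup_t Q_t$ form a partition of $\hat{N}_1\cup\hat{N}_2$, which by $N_1\cup N_2=\hat{N}_1\cup\hat{N}_2$ and $\hat{y}_i=0$ off this set is precisely the support of $\hat{y}$. Summing the two bounds therefore telescopes to $\sum_{i\in N}\hat{y}_i$, and property \textbf{3d} gives $\sum_{i\in N}\hat{y}_i=\sum_{i\in N}y'_i\leq k$. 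Since the number of opened centers is an integer not exceeding the real number $\sum_{i\in N}\hat{y}_i\leq k$ and $k$ is itself an integer, I conclude that at most $k$ centers are opened, completing the proof.
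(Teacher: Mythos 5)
Your proof is correct and follows essentially the same route as the paper: the capacity bound is inherited directly from Lemma \ref{lemma:the number of centers of each star}, and the cardinality bound comes from summing the per-star bound $\lfloor R_t\rfloor$ together with the isolated locations of $\hat{N}_1$ and charging against $\sum_{i\in N}\hat{y}_i\leq k$ via property \textbf{3d}. The only difference is that you spell out the disjointness/covering of the stars and the telescoping of the charges more explicitly than the paper does.
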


\begin{proof}
Suppose we get stars $Q_1,\cdots,Q_t$ by decomposing all the trees.
Then by property \textbf{3d}, we know $\sum_{r=1}^t R_r+\sum_{i\in \hat{N}_1-\bigcup_{r=1}^t Q_r} \hat{y}_i\leq k$.
Moreover, we build at most $\sum_{r=1}^t \lfloor R_r\rfloor+\sum_{i\in \hat{N}_1-\bigcup_{r=1}^t Q_r} \lfloor\hat{y}_i\rfloor$ centers by Lemma \ref{lemma:the number of centers of each star} and the operation for locations that are in $\hat{N}_1$ but not in any star.
Consequently, we build at most $k$ centers. Again, by Lemma \ref{lemma:the number of centers of each star} we increase the capacity by at most a factor of $2+\frac{2}{\alpha}$ to satisfy all the demand constraints.
\end{proof}

For each location $i$ in star $Q_t$, let $r(i)\in Q_t$ denote the location that the demand $d'_i$ of $i$ is reassigned to.
Define the cost of star $Q_t$ as $\sum_{i\in Q_t}d'_i c_{r(i)i}.$

\begin{lemma}\label{lemma:extra cost of step 4}
The cost of stars can be bounded by $\sum_{i\in {N}_2}\sum_{j\in N} \sum_{i'\in M_i} d_j(4c_{i'j}x_{i'j}+8\alpha C_j x_{i'j}).$
\end{lemma}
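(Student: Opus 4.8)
The plan is to bound the total cost of stars in two stages: first reduce it to a weighted sum of the ``parent distances'' $c_{s(i)i}$, and then bound each such distance against the LP cost through a nearby surviving cluster core. For the first stage, recall that in every star the children lie in $\hat{N}_2\subseteq N_2$ and the root $t$ is exactly the parent $s(i)$ of each child $i$, by the way the trees were built from the edges $i\to s(i)$. When a child $i$ is not opened, its demand $d'_i$ is reassigned either to the root $t=s(i)$ or to a sibling at least as close to $t$ as $i$ is; in the latter case the triangle inequality through $t$ gives $c_{r(i)i}\le c_{r(i)t}+c_{ti}\le 2c_{ti}=2c_{s(i)i}$, and in the former case $c_{r(i)i}=c_{s(i)i}$. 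Opened children and roots contribute $0$ since $c_{ii}=0$. Summing over all stars, and adding the nonnegative terms for the children of $N_2$ that are not reassigned,
\[
\sum_t\sum_{i\in Q_t}d'_i c_{r(i)i}\ \le\ 2\sum_{i\in N_2}d'_i\,c_{s(i)i}.
\]

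For the second stage I would expand $d'_i$ using Definition \ref{def:new demands} together with the fact that, after Step 2, the demand of every client $j$ originally served by some $i'\in M_i$ is now served by the core $i$ (Lemma \ref{lemma:extra cost of step 2}(1)); thus $d'_i=\sum_{j\in N}\sum_{i'\in M_i}d_j x_{i'j}$, and it suffices to prove the per-term distance bound
\[
c_{s(i)i}\ \le\ 2c_{i'j}+4\alpha C_j\qquad\text{whenever } i\in N_2,\ i'\in M_i,\ x_{i'j}>0 .
\]
Substituting this into the displayed inequality immediately produces the claimed right-hand side $\sum_{i\in N_2}\sum_{j\in N}\sum_{i'\in M_i}d_j(4c_{i'j}x_{i'j}+8\alpha C_j x_{i'j})$. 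To prove the per-term bound when $j\notin M_i$, I would take $p=N'(j)$, the core of the cluster containing $j$: it is a cluster core, hence survives ($p\in N'\subseteq N_1\cup N_2$), and it differs from $i$, so $c_{s(i)i}\le c_{ip}$. The triangle inequality gives $c_{i,N'(j)}\le c_{ij}+c_{j,N'(j)}$, and I would bound $c_{ij}\le 2c_{i'j}+2\alpha C_j$ by Lemma \ref{lemma:extra cost of step 2}(1) (with $i$ the core and $i'$ the original server of $j$) and $c_{j,N'(j)}\le 2\alpha C_j$ by property \textbf{1a}, yielding exactly $2c_{i'j}+4\alpha C_j$.

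The main obstacle is the remaining case $j\in M_i$, where $N'(j)=i$ and the routing above collapses to the useless inequality $c_{ii}=0$, so one must exhibit a \emph{different} surviving location near $i$. Here I would exploit that $M_i$ is non-terminal: since $x_{i'j}\le y_{i'}$ we have $\sum_{i'\in M_i}x_{i'j}\le Z_i$, and $Z_i<1$ for a non-terminal cluster, so every client is served with total fraction strictly less than $1$ inside $M_i$ and is therefore served by some location $p\notin M_i$, whose cluster core $N'(p)\ne i$ survives. Turning this mere existence statement into the quantitative bound is the delicate point, because an arbitrary outside server $p$ may be far from $j$ (when $Z_i$ is close to $1$ the outside fraction $1-Z_i$ is tiny and can sit at distance on the order of $C_j/(1-Z_i)$). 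I therefore expect the in-cluster terms to require a more global charging argument than the clean per-term routing available for $j\notin M_i$: rather than bounding each $c_{s(i)i}$ in isolation, one would charge the in-cluster demand that core $i$ carries against the portion of those clients' LP assignment that leaves $M_i$, so that the far outside servers pay for the reassignment. Making this charging consistent across all cores in $N_2$ — and checking it still fits inside the per-$(i,j,i')$ budget $2c_{i'j}+4\alpha C_j$ — is where I anticipate the real work of the proof to lie.
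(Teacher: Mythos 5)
Your first stage throws away exactly the leverage needed for the case you correctly flag as problematic. The paper also reduces the star cost to a per-child bound $d'_i c_{r(i)i}\le 2 d'_i c_{s(i)i}$, but it does \emph{not} stop there: since every unopened child satisfies $\hat{y}_i\le\frac{\alpha-1}{\alpha}$, one has $1\le \alpha(1-\hat{y}_i)$, so the cost is further bounded by $2\alpha(1-\hat{y}_i)d'_i c_{s(i)i}$, and property \textbf{3e} converts this to $2\alpha(1-y'_i)d'_i c_{s(i)i}$. That weight $(1-y'_i)$ is what makes the in-cluster case ($N'(j)=i$, i.e.\ $j\in M_i$) a clean per-term bound after all: because $y'_i=Z_i<1$, a Markov-type argument on the LP assignment of $i$'s own demand produces a location $i^*\notin M_i$ with $x_{i^*i}>0$ and $c_{i^*i}\le \frac{C_i}{1-y'_i}$, whence $c_{s(i)i}\le c_{iN'(i^*)}\le 2c_{ii^*}\le \frac{2C_i}{1-y'_i}$; the factor $(1-y'_i)$ cancels and leaves $2\alpha(1-y'_i)c_{s(i)i}\le 4\alpha C_i$. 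A separate argument about the clustering order (which you also do not address) shows $C_i\le 2C_j$ whenever $C_i>C_j$, giving the $8\alpha C_j$ term. So no global charging scheme is needed; the "delicate point" you defer is resolved locally once the weight is retained.

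Your out-of-cluster case is fine and matches the paper's case (b) — there the weight is simply bounded by $\alpha(1-y'_i)\le 1$, which is why your unweighted version happens to give the same constants in that case. But as written, the proposal proves the lemma only for the contribution of clients $j$ with $N'(j)\ne i$, explicitly leaves the $j\in M_i$ contribution open, and the sketch of a "more global charging argument" is not carried out, so the proof is incomplete. The single missing idea is to carry the factor $\alpha(1-\hat{y}_i)\ge 1$ through stage one and trade it, via property \textbf{3e}, against the $\frac{1}{1-y'_i}$ blow-up in the distance from a non-terminal core to its nearest surviving neighbor.
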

\begin{proof}
We only consider the service cost for demand $d'_i,i\in \hat{N}_2$, as we open a center at each location in $\hat{N}_1$ and serve its demand by itself.
Based on rounding operations in Step 4, property \textbf{3e}, the definition of $d'$, and Concentrate($M_i$),$i\in N_2$, we get an upper bound $\sum_{i\in {N}_2} \sum_{j\in N} \sum_{i'\in M_i} 2\alpha (1-y'_i)d_{j}x_{i'j}c_{s(i)i}$. Then, we show that $2\alpha (1-y'_i)d_{j}x_{i'j}c_{s(i)i}\leq d_{j}(4c_{i'j}x_{i'j}+8\alpha C_j x_{i'j})$ for both cases $N'(j)=i$ and $N'(j)\neq i.$ \textup{(See Appendix \ref{appendix:extra cost of step 4} for the details.)}
\end{proof}
\section{Analysis}
In our algorithm, we reassign the service twice: in Step 2 and Step 4. The cost of reassignment for Step 2 (Step 4) can be bounded by Lemma \ref{lemma:extra cost of step 2} (Lemma \ref{lemma:extra cost of step 4}). Combining these two upper bounds, the total cost can be bounded by
\begin{eqnarray*}
&&\sum_{i\in {N}_2}\sum_{j\in N} \sum_{i'\in M_i} d_j(2c_{i'j}+2\alpha C_j)x_{i'j}+\sum_{i\in N'-{N}_2}\sum_{j\in N}\sum_{i'\in M_i} d_j(3c_{i'j}+4\alpha C_j)x_{i'j}\\
&&+\sum_{i\in {N}_2}\sum_{j\in N} \sum_{i'\in M_i} d_j(4c_{i'j}x_{i'j}+8\alpha C_j x_{i'j})\\
&&\leq \sum_{i\in {N}}\sum_{j\in N} d_j(6c_{ij}+10\alpha C_j)x_{ij}=\sum_{j\in N} d_j(6C_j+10\alpha C_j)=(6+10\alpha)C_{LP}.
\end{eqnarray*}

Then combining with Lemma \ref{lemma:easy case} and \ref{lemma:blowing up factor for capacity}, we can prove the following theorem.
\begin{theorem}\label{lemma:main theorem for CKM}
For any $\alpha\geq 4,$ there is a $(6+10\alpha)$-approximation algorithm for the hard uniform capacitated $k$-median problem with increasing the capacity by factor at most $2+\frac{2}{\alpha}.$
\end{theorem}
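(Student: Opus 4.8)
The plan is to assemble the lemmas established in Sections~2.1--2.4 via a single case split driven by Assumption~\ref{ass:nontrivial case}, and then verify that both the capacity-violation factor and the approximation ratio come out as claimed in each case. The heavy lifting has already been done inside the individual lemmas, so the theorem is essentially an aggregation argument.

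First I would dispose of the easy case. If $|N_2|-1 < \sum_{i\in N_2} y'_i$, then Lemma~\ref{lemma:easy case} already produces an integral solution opening at most $k$ centers, with cost at most $(3+4\alpha)C_{LP}$ and capacity blown up by a factor of $2$. Since $\alpha\ge 4$ gives $3+4\alpha \le 6+10\alpha$ and $2 \le 2+\frac{2}{\alpha}$, this solution satisfies the theorem, so the remaining work is confined to the complementary case captured by Assumption~\ref{ass:nontrivial case}. Under that assumption I would run Steps~3 and~4; the feasibility half---that we open at most $k$ centers and violate each capacity by at most $2+\frac{2}{\alpha}$---is exactly Lemma~\ref{lemma:blowing up factor for capacity}, which rests on the per-star bound of Lemma~\ref{lemma:the number of centers of each star} together with the conservation of total opening value (property \textbf{3d}). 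So feasibility needs no new argument beyond invoking these lemmas.

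The computation proper is in bounding the cost. The algorithm reassigns demand twice, in Step~2 and in Step~4, so the total service cost is at most the sum of the two reassignment bounds. For the Step-2 reassignment I split clusters by type: Lemma~\ref{lemma:extra cost of step 2}(1) bounds the non-terminal contribution by $\sum_{i\in N_2}\sum_{j}\sum_{i'\in M_i} d_j(2c_{i'j}+2\alpha C_j)x_{i'j}$ and part~(2) bounds the terminal contribution by $\sum_{i\in N'-N_2}\sum_{j}\sum_{i'\in M_i} d_j(3c_{i'j}+4\alpha C_j)x_{i'j}$, while Lemma~\ref{lemma:extra cost of step 4} bounds the Step-4 reassignment by $\sum_{i\in N_2}\sum_{j}\sum_{i'\in M_i} d_j(4c_{i'j}+8\alpha C_j)x_{i'j}$. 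The key structural fact is that the clusters partition $N$ (property \textbf{1d}), so summing over cores $i\in N'$ and over $i'\in M_i$ is the same as summing over all $i'\in N$, where $N' = N_2 \cup (N'-N_2)$ disjointly. Adding coefficients term by term, the non-terminal cores pick up $2+4=6$ on the $c_{i'j}$ term and $2+8=10$ on the $\alpha C_j$ term, whereas the terminal cores contribute only $3\le 6$ and $4\le 10$ (Step~4 touches only $\hat{N}_2\subseteq N_2$). Hence the whole sum is at most $\sum_{i\in N}\sum_{j\in N} d_j(6c_{ij}+10\alpha C_j)x_{ij}$, which collapses via $\sum_i c_{ij}x_{ij}=C_j$ and $\sum_i x_{ij}=1$ to $\sum_j d_j(6C_j+10\alpha C_j)=(6+10\alpha)C_{LP}$.

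I expect the main obstacle to be bookkeeping rather than any single hard inequality: the demand of an original location $j$ is served by some $i'$ in the LP, transferred to its cluster core in Step~2, and possibly reassigned a second time inside a star in Step~4, so the two reassignment bounds must be indexed consistently (by original location $j$, LP-server $i'$, and cluster core $i$) for their coefficients to add. Getting this indexing right---and in particular recognizing that only non-terminal cores incur both reassignments while terminal cores incur only the first, and that the terminal per-core bound $(3+4\alpha)$ is dominated by $(6+10\alpha)$---is the crux of turning the separately proved lemmas into the stated ratio.
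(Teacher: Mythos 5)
Your proposal is correct and follows essentially the same route as the paper: dispose of the easy case via Lemma \ref{lemma:easy case}, invoke Lemma \ref{lemma:blowing up factor for capacity} for feasibility, and add the coefficients from Lemmas \ref{lemma:extra cost of step 2} and \ref{lemma:extra cost of step 4} over the cluster partition to collapse the sum to $(6+10\alpha)C_{LP}$. The paper's analysis section performs exactly this aggregation, including the observation that terminal cores contribute only the dominated $(3+4\alpha)$ terms.
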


\section*{Acknowledgements}

We thank Dion Gijswijt for insightful discussions.

\bibliographystyle{splncs_srt}
\bibliography{KFL}

\newpage
{\large \bf APPENDIX}
\begin{appendix}
\section{Extent Our Algorithm to Solve Another Model}\label{appendix:extent to solve another model}
The capacitated $k$-median location problem can be formulated as the following mixed integer program, where variable $x_{ij}$ indicates the fraction of the demand of client $j$ that is served by facility $i$, and $y_{i}$ indicates if facility $i$ is open. Let $y_i$ take value one if facility $i$ is open and value zero otherwise. We denote this model by CKL.

\begin{alignat}{2}
\min\quad&\sum_{i\in F}\sum_{j\in D}d_j c_{ij} x_{ij}\\
\text{subject to:}\quad&\sum_{i\in F}{x_{ij}} = 1 ,&\forall j\in D,\\
&\sum_{j\in D}{d_j x_{ij}} \leq  M y_i,&\forall i\in F,\\
&\sum_{i\in F}{y_i} \leq  k,\\
&0\leq x_{ij}\leq  y_i, &\forall i\in F,j\in D,\\
&y_{i} \in \{0,1\},&\forall i\in F. \label{eq:binary constraint}
\end{alignat}
Replacing constraints (\ref{eq:binary constraint}) by
\begin{alignat}{2}
0 \leq y_{i}\leq 1, i\in F, \label{binary relaxation}
\end{alignat}
we obtain the LP-relaxation of CKL.

\subsection{The Algorithm}
Let $(x^0,y^0)$ be an optimal solution to the LP-relaxation of CKL. For each facility $i\in F$, define a demand
\[
d^1_i=\sum_{j\in D} d_j x^0_{ij}.
\]

To make use of the algorithm presented in Section \ref{sec:approximation algorithm for CKM}, we set $N:=F$. That is, each location $i\in N$ has a capacity $M$ and demand $d^1_i$. Then, we get an instance of the capacitated $k$-median problem (CKM) considered in Section \ref{sec:approximation algorithm for CKM}. Suppose we get an integral solution $(x^1,y^1)$ for this constructed instance by the algorithm proposed in Section \ref{sec:approximation algorithm for CKM}.

Then, we construct an integral solution $(x^*,y^*)$ for the original instance of CKL by redistributing the demands $d^1_{i'}$ of location (facility) $i'\in N$ back to clients $D$. That is, set $y^*:=y^1;$ and set $x^*_{ij}:=\sum_{i'\in N} (x^1_{ii'} x^0_{i'j}),$ for each $i\in N=F, j\in D.$

\begin{lemma}
$(x^*,y^*)$ is an integral solution for CKL with $\sum_{j\in D} d_j x^*_{ij}\leq (2+\frac{2}{\alpha})M y^*_i$ for each $i\in F,$ where $\alpha\geq 4.$
\end{lemma}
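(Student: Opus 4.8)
The plan is to verify that $(x^*,y^*)$ satisfies every CKL constraint and then to track how the demand is routed through the two-stage reassignment, showing that the capacity usage stays within the factor $2+\frac{2}{\alpha}$ already guaranteed for the constructed CKM instance. Integrality and the cardinality bound come for free: since $(x^1,y^1)$ is the integral output of the algorithm of Section~\ref{sec:approximation algorithm for CKM} applied to the CKM instance with demands $d^1$, we have $y^*_i=y^1_i\in\{0,1\}$ for all $i\in F$, and $\sum_{i\in F}y^*_i=\sum_{i\in N}y^1_i\le k$ by Lemma~\ref{lemma:blowing up factor for capacity}.

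Next I would check demand coverage and the box constraints, using two facts about the integral CKM solution: each location $i'\in N$ has its demand fully assigned, i.e. $\sum_{i\in N}x^1_{ii'}=1$, and every $x^1_{ii'}\in\{0,1\}$ with $x^1_{ii'}=1$ only if $y^1_i=1$ (demand is reassigned solely to opened centers). Exchanging the order of summation in $x^*_{ij}=\sum_{i'\in N}x^1_{ii'}x^0_{i'j}$ gives
$$\sum_{i\in F}x^*_{ij}=\sum_{i'\in N}x^0_{i'j}\Big(\sum_{i\in F}x^1_{ii'}\Big)=\sum_{i'\in F}x^0_{i'j}=1,$$
where the last equality is feasibility of $(x^0,y^0)$. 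For the box constraint, $x^*_{ij}\ge 0$ is immediate; if $y^*_i=0$ then all $x^1_{ii'}=0$ so $x^*_{ij}=0$, and if $y^*_i=1$ then $x^*_{ij}=\sum_{i':\,x^1_{ii'}=1}x^0_{i'j}\le\sum_{i'\in F}x^0_{i'j}=1=y^*_i$, since a partial sum of the nonnegative quantities $x^0_{i'j}$ cannot exceed their total, which is $1$. Hence $0\le x^*_{ij}\le y^*_i$ for all $i,j$.

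The heart of the argument is the capacity bound. Summing $d_j x^*_{ij}$ over clients and swapping the order of summation yields
$$\sum_{j\in D}d_j x^*_{ij}=\sum_{i'\in N}x^1_{ii'}\Big(\sum_{j\in D}d_j x^0_{i'j}\Big)=\sum_{i'\in N}x^1_{ii'}\,d^1_{i'},$$
by the very definition $d^1_{i'}=\sum_{j\in D}d_j x^0_{i'j}$. The right-hand side is exactly the load that the CKM solution $(x^1,y^1)$ places on center $i$ under the demands $d^1$. Therefore the capacity guarantee of Theorem~\ref{lemma:main theorem for CKM} for the constructed instance gives $\sum_{i'\in N}x^1_{ii'}d^1_{i'}\le(2+\frac{2}{\alpha})M y^1_i=(2+\frac{2}{\alpha})M y^*_i$, which is the claimed inequality.

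I expect the only subtle point to be the correct bookkeeping of indices in this final identification: in the CKM notation $x^1_{ii'}$ is the fraction of the demand $d^1_{i'}$ sitting at facility $i'$ that is served by center $i$, so the composite assignment $x^*_{ij}=\sum_{i'}x^1_{ii'}x^0_{i'j}$ routes each client's demand first to its fractional LP facility and then to the center chosen by the CKM rounding. Once the summations are exchanged in the right order, both the coverage identity and the capacity inequality drop out, and crucially the factor $2+\frac{2}{\alpha}$ is inherited unchanged rather than squared, because the first stage $x^0$ is a genuine unviolated fractional assignment whose column sums equal exactly $1$; this non-compounding of the violation is really the whole content of the reduction.
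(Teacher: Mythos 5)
Your proposal is correct and follows essentially the same route as the paper's own proof: verify coverage and the box constraints by exchanging the order of summation in $x^*_{ij}=\sum_{i'}x^1_{ii'}x^0_{i'j}$, and obtain the capacity bound by identifying $\sum_{j\in D}d_jx^*_{ij}$ with $\sum_{i'}x^1_{ii'}d^1_{i'}$ and invoking the guarantee for the constructed CKM instance. The only cosmetic difference is that you argue the box constraint by cases on $y^*_i\in\{0,1\}$ where the paper uses the single inequality $x^1_{ii'}\le y^1_i$; both are equivalent.
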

\begin{proof}
First, we show that $\sum_{i\in F} x^*_{ij}=1$ for each $j\in D$. That is, for each client $j\in D$, its demand $d_j$ is satisfied. Note that $N=F.$ For each client $j\in D$, we have
\begin{eqnarray*}
&&\sum_{i\in F} x^*_{ij}=\sum_{i\in N} \sum_{i'\in N} (x^1_{ii'} x^0_{i'j})=\sum_{i'\in N} \sum_{i\in N} (x^1_{ii'} x^0_{i'j})\\
&&=\sum_{i'\in N} (x^0_{i'j}\sum_{i\in N} x^1_{ii'})=\sum_{i'\in N}  x^0_{i'j}=1,
\end{eqnarray*}
where the first equality follows by the definition of $x^*_{ij}$; the fourth equality holds as $(x^1,y^1)$ is an integral solution for the constructed instance, i.e., $\sum_{i\in N} x^1_{ii'}=1,$ for each $i'\in N.$

Second, we show that $\sum_{i\in F}y^*_i\leq k.$ That is, we open at most $k$ facilities. This is trivial by Lemma \ref{lemma:blowing up factor for capacity}.

Third, we show that $x^*_{ij}\leq  y^*_i, \forall i\in F,j\in D.$ This is also trivial, because
\begin{eqnarray*}
&&x^*_{ij}=\sum_{i'\in N} (x^1_{ii'} x^0_{i'j})\leq \sum_{i'\in N} (y^1_{i} x^0_{i'j})=y^1_{i} \sum_{i'\in N} x^0_{i'j}=y^1_{i}=y^*_{i},
\end{eqnarray*}
where the first inequality holds as $(x^1,y^1)$ is an integral solution for the constructed instance;
the third equality holds as $(x^0,y^0)$ is an optimal solution to the LP-relaxation.

Then, we show that $\sum_{j\in D} d_j x^*_{ij}\leq (2+\frac{2}{\alpha})M y^*_i$ for each $i\in F.$ That is, we only violate the capacities by a factor of $2+\frac{2}{\alpha}.$ For each $i\in F$, we have
\begin{eqnarray*}
&&\sum_{j\in D} d_j x^*_{ij}=\sum_{j\in D} (d_j \sum_{i'\in N} x^1_{ii'} x^0_{i'j})=\sum_{j\in D} \sum_{i'\in N} (d_j x^1_{ii'} x^0_{i'j})\\
&&=\sum_{i'\in N} \sum_{j\in D} (d_j x^1_{ii'} x^0_{i'j})=\sum_{i'\in N} x^1_{ii'} \sum_{j\in D} (d_j x^0_{i'j})=\sum_{i'\in N} x^1_{ii'} d^1_{i'}\\
&&\leq (2+\frac{2}{\alpha})M y^1_{i}=(2+\frac{2}{\alpha})M y^*_{i},
\end{eqnarray*}
where the fifth equality holds by the definition of demand $d^1;$ the inequality holds by Theorem \ref{lemma:main theorem for CKM}.
\end{proof}

\subsection{Analysis}
\begin{lemma}
For any $\alpha\geq 4,$ there is a $(13+20\alpha)$-approximation algorithm for the hard uniform capacitated $k$-median location problem (CKL) by increasing the capacity by factor $2+\frac{2}{\alpha}.$
\end{lemma}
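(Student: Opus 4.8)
The plan is to bound the routing cost of $(x^*,y^*)$ by decomposing each served unit of demand through the intermediate facility $i'$ and charging it to two already-controlled quantities: the cost of the CKM solution $(x^1,y^1)$ and the original LP cost of $(x^0,y^0)$, which I denote $C_{LP}^0$ and which lower-bounds the integral optimum $\mathrm{OPT}$. Feasibility and the capacity blow-up $2+\frac{2}{\alpha}$ are already supplied by the preceding lemma, so this lemma is purely about the objective value.

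First I would write out the objective. Since $x^*_{ij}=\sum_{i'\in N}x^1_{ii'}x^0_{i'j}$, the cost of $(x^*,y^*)$ equals $\sum_{i,j}d_j c_{ij}x^*_{ij}=\sum_{i,i',j}d_j c_{ij}x^1_{ii'}x^0_{i'j}$. Applying the triangle inequality $c_{ij}\le c_{ii'}+c_{i'j}$ (routing $i\to i'\to j$) splits this into Term A $=\sum_{i,i',j}d_j c_{ii'}x^1_{ii'}x^0_{i'j}$ and Term B $=\sum_{i,i',j}d_j c_{i'j}x^1_{ii'}x^0_{i'j}$. Term B collapses using $\sum_i x^1_{ii'}=1$: summing over $i$ first yields $\sum_{i',j}d_j c_{i'j}x^0_{i'j}=C_{LP}^0$. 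Term A rearranges to $\sum_{i,i'}c_{ii'}x^1_{ii'}d^1_{i'}$, which is exactly the cost of the CKM solution $(x^1,y^1)$ on the demands $d^1$; by Theorem~\ref{lemma:main theorem for CKM} this is at most $(6+10\alpha)$ times the cost of the optimal LP solution to the constructed CKM instance.

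The crux is to bound that optimal CKM LP value by $2C_{LP}^0$, which I would do by exhibiting a single feasible fractional CKM solution of that cost. Keep $y^0$ as the opening vector and route the demand of each location $i'$ by reusing every client's own fractional assignment: set $x^{CKM}_{ii'}=\frac{1}{d^1_{i'}}\sum_j d_j x^0_{i'j}x^0_{ij}$ (with the degenerate case $d^1_{i'}=0$ assigned arbitrarily, e.g.\ $x^{CKM}_{i'i'}=1$, since it carries neither cost nor load). Feasibility is routine: $\sum_i x^{CKM}_{ii'}=1$ follows from $\sum_i x^0_{ij}=1$; the load at $i$ is $\sum_{i'}d^1_{i'}x^{CKM}_{ii'}=\sum_j d_j x^0_{ij}=d^1_i\le My^0_i$; and $x^{CKM}_{ii'}\le y^0_i$ follows from $x^0_{ij}\le y^0_i$. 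Its cost is $\sum_{i,i',j}d_j c_{ii'}x^0_{ij}x^0_{i'j}$, and the triangle inequality $c_{ii'}\le c_{ij}+c_{i'j}$ together with $\sum_{i'}x^0_{i'j}=1$ and $\sum_i x^0_{ij}=1$ bounds each of the two resulting symmetric sums by $C_{LP}^0$, giving at most $2C_{LP}^0$.

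Combining the pieces, Term A $\le (6+10\alpha)\cdot 2C_{LP}^0$ and Term B $\le C_{LP}^0$, so the total cost is at most $(12+20\alpha+1)C_{LP}^0=(13+20\alpha)C_{LP}^0\le(13+20\alpha)\mathrm{OPT}$. I expect the only genuine work to be the construction of the auxiliary CKM solution and the second triangle-inequality estimate that produces the factor $2$; the rest is bookkeeping with the two identities $\sum_i x^1_{ii'}=1$ and $\sum_i x^0_{ij}=1$. The subtle point to watch is keeping the two triangle inequalities straight, since one ($c_{ij}\le c_{ii'}+c_{i'j}$) is used to analyze $(x^*,y^*)$ while the other ($c_{ii'}\le c_{ij}+c_{i'j}$) is used only to build and cost the feasible CKM witness.
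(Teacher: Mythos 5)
Your proof is correct, and it reaches the same $(13+20\alpha)$ bound by a route that differs from the paper's in its middle step. Both arguments begin identically: the triangle inequality $c_{ij}\le c_{ii'}+c_{i'j}$ splits $COST(x^*,y^*)$ into the cost of $(x^1,y^1)$ on the demands $d^1$ plus $COST(x^0,y^0)$ (the paper states this split in one line; you carry out the bookkeeping with $\sum_i x^1_{ii'}=1$ explicitly). The divergence is in how the constructed CKM instance's optimum is bounded by $2\,OPT_{CKL}$. The paper asserts $OPT_{CKM}\le OPT_{CKL}+COST(x^0,y^0)$ ``by the process to obtain the constructed instance'' and then uses the guarantee $COST(x^1,y^1)\le(6+10\alpha)OPT_{CKM}$ relative to the \emph{integral} optimum of the constructed instance; making that assertion rigorous would itself require routing the demands $d^1$ through the clients via the optimal integral CKL assignment. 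You instead exhibit an explicit \emph{fractional} witness $(x^{CKM},y^0)$ with $x^{CKM}_{ii'}=\frac{1}{d^1_{i'}}\sum_j d_j x^0_{i'j}x^0_{ij}$, verify its feasibility for the CKM LP, and bound its cost by $2C_{LP}^0$ via the second triangle inequality $c_{ii'}\le c_{ij}+c_{i'j}$; you then invoke the fact (visible in the paper's analysis section) that the algorithm's guarantee is actually relative to the CKM LP optimum, not just the integral optimum. Your version is more self-contained --- it never reasons about the integral CKL optimum and in fact proves the slightly stronger statement that the final cost is at most $(13+20\alpha)$ times the CKL LP value --- at the price of constructing and checking the auxiliary fractional solution, which the paper's shorter chain of inequalities avoids writing down. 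The one point to keep straight, which you do, is that the degenerate case $d^1_{i'}=0$ must be assigned harmlessly and that the LP-relative form of Theorem~\ref{lemma:main theorem for CKM} is what is actually established in the paper's analysis.
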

\begin{proof}
Let $COST(\cdot,\cdot)$ be the total cost of solution $(\cdot,\cdot)$. Let $OPT_{CKL}$ and $OPT_{CKM}$ be the optimal objective value of our original instance and constructed instance respectively.

By the process to obtain the constructed instance, we have $OPT_{CKM}\leq OPT_{CKL}+COST(x^0,y^0).$ Then,
\begin{eqnarray*}
&&COST(x^*,y^*)\\
&&\leq COST(x^1,y^1)+COST(x^0,y^0)\\
&&\leq (6+10\alpha)OPT_{CKM}+COST(x^0,y^0)\\
&&\leq (6+10\alpha)(OPT_{CKL}+COST(x^0,y^0))+COST(x^0,y^0)\\
&&\leq (13+20\alpha)OPT_{CKL},
\end{eqnarray*}
where the first inequality holds according to the process to get the solution $(x^*,y^*)$ and triangle inequalities; the second inequality follows by Theorem \ref{lemma:main theorem for CKM}; the last inequality holds as $COST(x^0,y^0)\leq OPT_{CKL}$.
\end{proof}

\section{Proof of Lemma \ref{lemma:property 1c}}\label{appendix:property 1c}
\begin{proof}
First, we show that location $i$ belongs to cluster $M_l$ if $c_{il}\leq \alpha C_l.$
For contradiction, suppose for some $i\in N$ with $c_{il}\leq \alpha C_l,$ $i\in M_{l'}$ instead of $i\in M_l$, where $l'\in N'-\{l\}$. This means $c_{il'}\leq c_{il}$ as we add $i$ to cluster $M_{l'}$ only if $N'(i)=l'$. Then, we have
\[
 c_{ll'}\leq c_{il}+c_{il'}\leq 2c_{il} \leq 2\alpha C_l,
\]
which is a contradiction as $c_{ll'}>2\alpha C_l$ by property \textbf{1b}.

Then, note that the total opening value of the locations, which are strictly greater than $\alpha C_l$ away from $l$ and serve some demand of $l$, should be strictly less than $\frac{1}{\alpha}$. That is, $\sum_{i\in N : c_{il}>\alpha C_l} x_{il}<\frac{1}{\alpha}$, otherwise $\sum_{i\in N}{c_{il} x_{il}}>C_l$ which is a contradiction.

So, $Z_l\geq \sum_{i\in N : c_{il}\leq \alpha C_l} y_{i}\geq \sum_{i\in N : c_{il}\leq \alpha C_l} x_{il} \geq \frac{\alpha-1}{\alpha},$ as $\sum_{i\in N} x_{il}=1$ and $x_{il}\leq y_i$ for each $i\in N.$
\end{proof}

\section{Proof of Lemma \ref{lemma:Move}}\label{appendix:Move}
\begin{proof}
It is easy to see that the Procedure 2 preserves (1) and (2).

To prove (3), it is sufficient to show that $\sum_{j\in N} d_j x'_{ij}\leq M y'_i$ holds for $i=j_a,j_b$
as we only change the demands served by $j_a$ and $j_b$.

For $j_b$, before performing the procedure we have
\[
 \sum_{j\in N} d_j x'_{j_b j}\leq M y'_{j_b}.
\]
Then, we get
\[
  (1-\frac{\delta}{y'_{j_b}})\sum_{j\in N} d_j x'_{j_b j}\leq M y'_{j_b}(1-\frac{\delta}{y'_{j_b}})=M (y'_{j_b}-\delta),
\] as $1-\frac{\delta}{y'_{j_b}}\geq 0$. Note that in the procedure we set
$x'_{j_b j}:=x'_{j_bj}-\frac{\delta}{y'_{j_b}} x'_{j_b j}$ and $y'_{j_b}:=y'_{j_b}-\delta$. So the lemma holds for $j_b$.

For $j_a$, before performing the procedure we have
\[
 \sum_{j\in N} d_j x'_{j_a j}\leq M y'_{j_a}.
\]
Then, we get
\[
  \sum_{j\in N} d_j x'_{j_a j}+\frac{\delta}{y'_{j_b}}\sum_{j\in N} d_j x'_{j_b j}\leq M y'_{j_a}+\frac{\delta}{y'_{j_b}} M y'_{j_b}=M(y'_{j_a}+\delta).
\] Recall that we set $x'_{j_a j}:=x'_{j_a j}+\frac{\delta}{y'_{j_b}} x'_{j_b j}$ and $y'_{j_a}:=y'_{j_a}+\delta$ in the procedure. Thus, the lemma also holds for $j_a$.
\end{proof}

\section{Proof of Lemma \ref{lemma:Concentrate}[2c]}\label{appendix:Concentrate}
\begin{proof}
Property \textbf{2c.} Initially, we set $y'_i=y_i, x'_{ij}=x_{ij}$ for all $i,j\in N$. Thus, $x'_{ij} \leq y'_i$ holds, for each $i, j\in N.$ We will show that after the procedure these inequalities still hold.

If $M_l$ is a non-terminal cluster, we have $0<\sum_{j\in M_l}y_j<1$.
Thus, $y'_{j_a}+y'_{j_b}<1$ when we perform Procedure 2, i.e., $y'_{j_b}<1-y'_{j_a}$.
Then, $\delta=y'_{j_b}$. So, we always set $y'_{j_b}:=0$, and $x'_{j_b j}:=0$ for each $j\in N$.
Thus, $x'_{j_b j}\leq y'_{j_b}$ for each $j\in N$.

For $j_a$, initially we have $x'_{j_a j}\leq y'_{j_a}$ for each $j\in N$.
Then, we get
\[x'_{j_a j}+\frac{\delta}{y'_{j_b}}x'_{j_b j}\leq y'_{j_a}+\frac{\delta}{y'_{j_b}}x'_{j_b j}\leq y'_{j_a}+\delta,\]
where the last inequality holds as initially we also have $x'_{j_b j}\leq y'_{j_b}.$
Thus, after Procedure 2 we still have $x'_{j_a j}\leq y'_{j_a}, \forall j\in N$ because we set $x'_{j_a j}:=x'_{j_a j}+\frac{\delta}{y'_{j_b}} x'_{j_b j}$ and $y'_{j_a}:=y'_{j_a}+\delta$ in the procedure.

Otherwise, $M_l$ is a terminal cluster. For this case, after Procedure 3 $y'_{i}=0$ or $y'_{i}\geq 1$ for each $i\in M_l.$ Note that $\sum_{i\in N} x'_{ij}=1, \forall j\in N$ always hold in the procedure (Lemma \ref{lemma:Move}). Thus, we have $x'_{ij}\leq y'_{i},\forall j\in N$ for each location $i$ with $y'_{i}\geq 1$.

Observe that in Procedure 2, only if $\delta=y'_{j_b}$ then we set $y'_{j_b}:=0.$ Meanwhile, if $\delta=y'_{j_b}$, we always set $x'_{j_b j}:=0$ for each $j\in N$. In the step 4 of Procedure 3, if we set $y'_{j_a}:=0,$ then set $x'_{j_a j}:=0$ for each $j\in N$.
Thus, $x'_{ij}\leq y'_{i},\forall j\in N$ for each location $i$ with $y'_{i}=0$.
\end{proof}

\section{Proof of Lemma \ref{lemma:extra cost of step 2}}\label{appendix:extra cost of step 2}
\begin{proof}
The idea is similar as that in \cite{Charikar,Guha}. (1) In the procedure, if $M_l$ is a non-terminal cluster, we close all locations in $M_l-\{l\}$, and assign the demands originally served by location $j_b\in M_l$ to the cluster core $l$.

Moreover, we have $c_{l j_b}\leq c_{N'(j) j_b}$ as $N'(j_b)=l$ is the closest location to $j_b$ in $N'$ and $N'(j)\in N'.$ Then,
\begin{eqnarray*}
c_{l j}&&\leq c_{j_b j}+c_{l j_b}\\
&& \leq c_{j_b j}+c_{N'(j) j_b}\\
&& \leq c_{j_b j}+(c_{j_b j }+c_{N'(j) j})\\
&& \leq 2c_{j_b j}+2\alpha C_j,
\end{eqnarray*}
where the first and third inequality hold by triangle inequalities; the last inequality follows by property \textbf{1a}.

(2) If we move the demand of location $j$ served by $j_b$ to $j_a$, then we know $c_{l j_a}\leq c_{l j_b}$ by the procedure. Note that we also have $c_{l j_b}\leq c_{N'(j) j_b}$ for this case. Then,
\begin{eqnarray*}
c_{j_a j}&&\leq c_{j_a l}+c_{l j_b} +c_{j_b j}\\
&& \leq 2c_{l j_b} +c_{j_b j}\\
&& \leq 2c_{N'(j) j_b} +c_{j_b j}\\
&& \leq 2(c_{N'(j) j}+c_{j j_b}) +c_{j_b j}\\
&& \leq 4\alpha C_j+3c_{j_b j},
\end{eqnarray*}
where the first and fourth inequality hold by triangle inequalities; the last inequality follows by property \textbf{1a}.
\end{proof}

\section{The Details of Remark \ref{remark:step 3 terminates}}\label{appendix:step 3 terminates}
The Procedure 4 terminates at $r>1$. Note that we only consider the case $Y\leq |N_2|-1$, i.e., $\sum_{t=1}^v y'_{i_t}\leq |N_2|-1.$ For contradiction, suppose the procedure terminates at $r=1$, which means $\hat{y}_{i_t}=1$ for $t=2,\cdots,v.$ So, $\sum_{t=2}^v \hat{y}_{i_t}=v-1=|N_2|-1.$ Moreover, we have $\hat{y}_{i_1}>0$, since initially we set $\hat{y}_{i_1}:=\frac{\alpha-1}{\alpha}, \alpha\geq 4,$ and later we move strictly less than $\frac{1}{\alpha}$ opening value from $i_1$ to other location $t$ to make $\hat{y}_{i_t}:=1$ if necessary. Thus, we have $\sum_{t=1}^v \hat{y}_{i_t}=v-1+\hat{y}_{i_1}>|N_2|-1.$ Then since $\sum_{i\in N_2}{\hat{y}_i}=\sum_{i\in N_2}{y'_i}$ (Lemma \ref{lemma:Determine new opening values}), we get $\sum_{t=1}^v y'_{i_t}> |N_2|-1,$ a contradiction.

\section{Proof of Lemma \ref{lemma:Determine new opening values}}\label{appendix:Determine new opening values}
\begin{proof}
Property \textbf{3a}. For each location $i\in N-N_2$, we set $\hat{y}_i:=y'_i$. So, $1\leq \hat{y}_i< 2$ for each $i\in N_1$; $\hat{y}_i=0$ for each $i\in N-N_1\cup N_2$.

 For each location $i\in N_2$, initially we set $\hat{y}_{i}:=\frac{\alpha-1}{\alpha}$. In the procedure, we have two cases for location $i_r$ in $N_2$. First, if $Y'>0$ and $Y'+\hat{y}_{i_r}\geq 1, i_r\in N_2$, we always set $\hat{y}_{i_r}:=1$. So it is sufficient to check the case $Y'>0$ and $Y'+\hat{y}_{i_r}< 1$.
 For this case, we set $\hat{y}_{i_1}:=\hat{y}_{i_1}-(1-Y'-\hat{y}_{i_r}), \hat{y}_{i_r}:=1$ and terminates. Thus, we have $\hat{y}_{i_t}:=1$ for $t\geq r$, and $\hat{y}_{i_t}:=\frac{\alpha-1}{\alpha}$ for $2\leq t< r$. Moreover, we get $\frac{\alpha-2}{\alpha}< \hat{y}_{i_1}< \frac{\alpha-1}{\alpha}$ as $0<1-Y'-\hat{y}_{i_r}<\frac{1}{\alpha}$ (note that $\hat{y}_{i_r}=\frac{\alpha-1}{\alpha}$ before we set it to be $1$).

 Thus, for all $i\in N$, $\frac{\alpha-2}{\alpha}< \hat{y}_i\leq \frac{\alpha-1}{\alpha}$, or $1\leq \hat{y}_i< 2,$ or $\hat{y}_i=0;$ and
 $|\{i\in N \mid \frac{\alpha-2}{\alpha}< \hat{y}_i < \frac{\alpha-1}{\alpha}\}|\leq 1.$

 Property \textbf{3b, 3C.} By property \textbf{2a}, for all $i\in N$, $\frac{\alpha-1}{\alpha}\leq y'_i<2$ or $y'_i=0$; and $\sum_{j\in N} d_j x'_{ij}\leq M y'_i$.

 Notice that if for location $i$ we have $\frac{\alpha-2}{\alpha}< \hat{y}_i\leq \frac{\alpha-1}{\alpha}$ after the procedure, then we know $\frac{\alpha-1}{\alpha}\leq y'_i<1.$ Otherwise, $\hat{y}_i=0$ or $\hat{y}_i\geq 1$, a contradiction. Moreover, we make no change on $x'$. So we have if $\frac{\alpha-2}{\alpha}< \hat{y}_i\leq \frac{\alpha-1}{\alpha}$, then $\sum_{j\in N} d_j x'_{ij}\leq M y'_i< M.$

 If $1\leq  \hat{y}_i<2$ for location $i$, then we have $y'_i\leq \hat{y}_i$ as in the procedure only for the following two cases we will set $1\leq  \hat{y}_i<2$. For case 1: $1\leq y'_i<2$, we set $\hat{y}_i:= y'_i$. For case 2: $0< y'_i<1, i_r=i$ and $Y'> 0$, we set $\hat{y}_i=1$. For both cases, we have $y'_i\leq \hat{y}_i$.
 Thus, if $1\leq \hat{y}_i< 2$, then $\sum_{j\in N} d_j x'_{ij}\leq M y'_i\leq  M \hat{y}_i.$

Property \textbf{3d}. Recall that $1> y'_i\geq \frac{\alpha-1}{\alpha}, \forall i\in N_2$. So, we have:
$$\sum_{i\in N_2} y'_i d'_i c_{s(i) i}=\sum_{i\in N_2} (y'_i-\frac{\alpha-1}{\alpha}) d'_i c_{s(i) i}+\sum_{i\in N_2} \frac{\alpha-1}{\alpha} d'_i c_{s(i) i}.$$

In the process, initially we set $\hat{y}_i=\frac{\alpha-1}{\alpha},$ for all $i\in N_2.$
Then, we use $Y'=\sum_{i\in N_2} (y'_i-\hat{y}_i)$ to make the opening value of the location with biggest value $d'_i c_{s(i) i}, i\in N_2$, i.e., $i_v$, to be $1$.
If $\hat{y}_{i_t}=1$ for all $v\geq t\geq r$, then we try to make $\hat{y}_{i_{r-1}}$ to be 1 until $Y'+\hat{y}_{i_{r}}<1$.
If $Y'+\hat{y}_{i_{r}}<1$, then we reduce $\hat{y}_{i_1}$ by $1-Y'-\hat{y}_{i_r}$ to make $\hat{y}_{i_r}$ to be 1. That is, we only move the opening value from one location to the other locations.
So, $\sum_{i\in N_2}{\hat{y}_i}=\sum_{i\in N_2}{y'_i}$ holds after this procedure.  Moreover, we set $\hat{y}_i:=y'_i$ for each $i\in N-N_2$. Thus, we also have $\sum_{i\in N}{\hat{y}_i}=\sum_{i\in N}{y'_i}\leq k.$

Property \textbf{3e}. Initially, the value of $\sum_{i\in N_2} \hat{y}_i d'_i c_{s(i) i}$ is $\sum_{i\in N_2} \frac{\alpha-1}{\alpha} d'_i c_{s(i) i}.$ Later, we transfer value $\delta$ from $\{i_1,\cdots,i_{r-1}\}$ to $i_r$ to make  $\hat{y}_{i_r}:=\hat{y}_{i_r}+\delta$ until $\hat{y}_{i_r}=1$.
For each transfer operation, we increase the value $\sum_{i\in N_2} \hat{y}_i d'_i c_{s(i) i}$ by
$\delta d'_{i_r} c_{s({i_r}) {i_r}}$, which is greater than or equal to $\delta d'_{i} c_{s({i}) {i}}$ for each $i\in\{i_1,\cdots,i_{r-1}\}.$

Moreover, by property \textbf{3d}, we know $\sum_{i\in N_2}{\hat{y}_i}=\sum_{i\in N_2}{y'_i}$ finally.
Therefore, $\sum_{i\in N_2} \hat{y}_i d'_i c_{s(i) i} \geq \sum_{i\in N_2} y'_i d'_i c_{s(i) i}$ after this procedure. Then, we have $\sum_{i\in N_2}(1-\hat{y}_i)d'_i c_{s(i) i} \leq \sum_{i\in N_2}(1-y'_i)d'_i c_{s(i) i}.$
\end{proof}

\section{Proof of Lemma \ref{lemma:total fractional opening value}}\label{appendix:total fractional opening value}
\begin{proof}
(1)By property \textbf{3a}, $|\{i\in N \mid \frac{\alpha-2}{\alpha}< \hat{y}_i < \frac{\alpha-1}{\alpha}\}|\leq 1.$ So in $\hat{N}_2$ at most one location has a fractional opening value in $(\frac{\alpha-2}{\alpha},\frac{\alpha-1}{\alpha})$, and all other locations have fractional opening value exactly equal to $\frac{\alpha-1}{\alpha}$.

So,
\[\sum_{i\in Q_t\cap \hat{N}_2} \hat{y}_i> \frac{\alpha-2}{\alpha}+\frac{\alpha-1}{\alpha} (2q-1)= \frac{2q\alpha-2q-1}{\alpha}= q+\frac{q\alpha-2q-1}{\alpha}.
\]

Moreover, since $\alpha\geq 4$ and $q\geq 1$, we have $\frac{q\alpha-2q-1}{\alpha}\geq \frac{2q-1}{\alpha}> 0.$ Thus, $\sum_{i\in Q_t\cap \hat{N}_2} \hat{y}_i> q.$

(2) First, we have \[\sum_{i\in Q_t\cap \hat{N}_2} \hat{y}_i> \frac{\alpha-2}{\alpha}+\frac{\alpha-1}{\alpha} 2q= \frac{2q\alpha-2q+\alpha-2}{\alpha}= q+1+\frac{q\alpha-2q-2}{\alpha}.
\]

Then, as $\alpha\geq 4$ and $q\geq 1$, we get $\frac{q\alpha-2q-2}{\alpha}\geq \frac{2q-2}{\alpha}\geq 0.$ Thus, $\sum_{i\in Q_t\cap \hat{N}_2} \hat{y}_i> q+1.$
\end{proof}

\section{Proof of Lemma \ref{lemma:the number of centers of each star}}\label{appendix:the number of centers of each star}
\begin{proof}
1. For the case: an even star rooted at location $t$ with $1\leq \hat{y}_t<2$, we totally build $q+1$ centers at location: $t, i_1,i_3,\cdots,i_{2q-1}$. By Lemma \ref{lemma:total fractional opening value}, we have
\[q+1\leq \lfloor \sum_{i\in Q_t\cap \hat{N}_2} \hat{y}_i \rfloor+1\leq \lfloor R_t \rfloor,\]
where the last inequality holds as $R_t=\sum_{i\in Q_t\cap \hat{N}_2} \hat{y}_i+\hat{y}_t$ and $\hat{y}_t\geq 1.$

We serve the demand $d'_{i_{2r}}$ of $i_{2r}$ by opened location $i_{2r-1}$. Then, the total demand served by $i_{2r-1}$ is $d'_{i_{2r}}+d'_{i_{2r-1}}.$ By property \textbf{3b}, we get $\sum_{j\in N}{d'_j \bar{x}_{i_{2r-1}j}}= d'_{i_{2r}}+d'_{i_{2r-1}}\leq 2M< (2+\frac{2}{\alpha})M \bar{y}_{i_{2r-1}}$. By property \textbf{3c},
for the root $t$ we have $\sum_{j\in N}{d'_j \bar{x}_{tj}}=d'_{t}< 2M< (2+\frac{2}{\alpha})M \bar{y}_{t}$.

2. For the case: an even star $Q_t$ rooted at location $t$ with $\frac{\alpha-2}{\alpha}< \hat{y}_t\leq \frac{\alpha-1}{\alpha}$, we build $q+1$ centers. By Lemma \ref{lemma:total fractional opening value}, we have $q+1 \leq \lfloor R_t \rfloor$ as $|Q_t\cap \hat{N}_2|=2q+1$. Again by property \textbf{3b}, we can obtain for each $i\in Q_t$, $\sum_{j\in N}{d'_j \bar{x}_{ij}}\leq (2+\frac{2}{\alpha})M \bar{y}_{i}$.

3. For the case: an odd star rooted at location $t$ with $1+\frac{2}{\alpha}\leq \hat{y}_t<2$,
we build a center at location: $t,i_1,i_3,\cdots,i_{2q+1}$, and serve the demand $d'_{i_{2r}}$ of $i_{2r}$ by opened location $i_{2r-1}$.

If $q\geq 1$, we build $q+2$ centers. By Lemma \ref{lemma:total fractional opening value}, we have
\[q+2\leq \lfloor \sum_{i\in Q_t\cap \hat{N}_2} \hat{y}_i \rfloor +1\leq \lfloor R_t \rfloor,\]
where the first inequality holds as $|Q_t\cap \hat{N}_2|=2q+1$, the last inequality holds as $R_t=\sum_{i\in Q_t\cap \hat{N}_2} \hat{y}_i+\hat{y}_t$ and $\hat{y}_t\geq 1.$
By property \textbf{3b} and \textbf{3c}, for each $i\in Q_t$, we have $\sum_{j\in N}{d'_j \bar{x}_{ij}}\leq (2+\frac{2}{\alpha})M \bar{y}_{i}$.

If $q=0$, i.e., the root $t$ has only one child $i_1$, we build two centers at location $t$ and $i_1$.
Thus, we build at most $\lfloor R_t \rfloor$ centers as $R_t=\hat{y}_t+\hat{y}_{i_1}> 1+\frac{2}{\alpha}+\frac{\alpha-2}{\alpha}=2$. Moreover, we serve the demand of $t$ and $i_1$ by themselves. So the relaxed capacity constraint $\sum_{j\in N}{d'_j \bar{x}_{ij}}\leq (2+\frac{2}{\alpha})M \bar{y}_i$ holds by property \textbf{3b} and \textbf{3c}.

4. For the case: an odd star rooted at location $t$ with $\frac{\alpha-2}{\alpha}< \hat{y}_t\leq \frac{\alpha-1}{\alpha}$ or $1\leq \hat{y}_t< 1+\frac{2}{\alpha}$,
we build a center at location: $t,i_2,i_4,\cdots,i_{2q}$, and serve the demand $d'_{i_{2r+1}}$ of $i_{2r+1}$ by opened location $i_{2r}$, serve the demand $d'_{i_{1}}$ of $i_{1}$ by $t$.

First, we consider $1\leq \hat{y}_t< 1+\frac{2}{\alpha}$.
If $q\geq 1$, we build $q+1$ centers. By Lemma \ref{lemma:total fractional opening value}, we have
\[q+1\leq \lfloor \sum_{i\in Q_t\cap \hat{N}_2} \hat{y}_i \rfloor \leq \lfloor R_t \rfloor,\]
where the last inequality holds as $R_t=\sum_{i\in Q_t\cap \hat{N}_2} \hat{y}_i+\hat{y}_t$ and $\hat{y}_t\geq 1.$
By property \textbf{3b}, for each $i_{2r}$ we have $\sum_{j\in N}{d'_j \bar{x}_{i_{2r}j}}= d'_{i_{2r}}+d'_{i_{2r+1}}\leq 2M< (2+\frac{2}{\alpha})M \bar{y}_{i_{2r}}$.
By property \textbf{3b} and \textbf{3c},
for the root $t$ we have $\sum_{j\in N}{d'_j \bar{x}_{tj}}=d'_{t}+d'_{i_{1}}\leq (1+\frac{2}{\alpha})M+M\leq (2+\frac{2}{\alpha})M \bar{y}_{t}$.

If $q=0$, i.e., the root $t$ has only one child $i_1$, we build a center at location $t$ and serve the demand of $i_1$ by $t$. Thus, we build at most $\lfloor R_t \rfloor$ centers as $1\leq R_t$. Moreover, $\sum_{j\in N}{d'_j \bar{x}_{tj}}=d'_t+d'_{i_1}\leq (2+\frac{2}{\alpha})M \bar{y}_{t}$, by property \textbf{3b} and \textbf{3c}.

Then, we consider $\frac{\alpha-2}{\alpha}< \hat{y}_t\leq \frac{\alpha-1}{\alpha}$.
If $q\geq 1$, we open $q+1$ centers. By Lemma \ref{lemma:total fractional opening value}, we have
\[q+1\leq \lfloor \sum_{i\in Q_t\cap \hat{N}_2} \hat{y}_i \rfloor,\]
as $|Q_t\cap \hat{N}_2|=2q+2.$
Again by property \textbf{3b} and \textbf{3c}, for each $i\in Q_t$ we have $\sum_{j\in N}{d'_j \bar{x}_{ij}}\leq (2+\frac{2}{\alpha})M \bar{y}_i$.

If $q=0$, i.e., the root $t$ has only one child $i_1$, we build a center at location $t$ and serve the demand of $i_1$ by $t$. Thus, we build at most $\lfloor R_t \rfloor$ centers as $R_t=\hat{y}_t+\hat{y}_{i_1}>\frac{\alpha-2}{\alpha}+\frac{\alpha-1}{\alpha}=1+\frac{\alpha-3}{\alpha}>1 (\alpha\geq 4)$, where the first inequality holds by property \textbf{3a}. Moreover, $\sum_{j\in N}{d'_j \bar{x}_{tj}}=d'_t+d'_{i_1}\leq (2+\frac{2}{\alpha})M \bar{y}_{t}$, by property \textbf{3b}.
\end{proof}

\section{Proof of Lemma \ref{lemma:extra cost of step 4}}\label{appendix:extra cost of step 4}
\begin{proof}
Note that in this proof we only consider location $i\in \hat{N}_2$, since we always build a center at each location in $\hat{N}_1$ and serve its demand by itself.

For each star $Q_t$, the reassignment is always to serve the demand $d'_{i}$ of location $i$ by an opened location $i'$ that is closer to the root $t$, where $i,i'\in Q_t$ and $c_{ti'}\leq c_{ti}.$ Recall that $s(i)$ is the closest location to $i$ in $({N}_1\cup {N}_2)-\{i\}$. By Procedure 5, we know $s(i)=s(i')=t.$
The cost for this reassignment is $d'_{i}c_{i' i}$, which can be bounded by $2d'_{i}c_{s(i)i}$ as $c_{i' i}\leq c_{s(i) i'}+c_{s(i)i}\leq 2c_{s(i)i}.$

Since $\frac{\alpha-2}{\alpha}< \hat{y}_i\leq \frac{\alpha-1}{\alpha}$ for each $i\in Q_t\cap \hat{N}_2$, we have
\[2d'_{i}c_{s(i)i}\leq 2\alpha (1-\hat{y}_i)d'_{i}c_{s(i)i}.\] We sum $2\alpha (1-\hat{y}_i)d'_{i}c_{s(i)i}$ over all $i\in \hat{N}_2$ to get an upper bound for the total cost of stars, i.e., $\sum_{i\in \hat{N}_2}2\alpha (1-\hat{y}_i)d'_{i}c_{s(i)i}.$
Note that $\hat{N}_2\subseteq N_2,$ and $\hat{y}_i\leq 1$ for each $i\in N_2$. Thus,
 \[\sum_{i\in \hat{N}_2}2\alpha (1-\hat{y}_i)d'_{i}c_{s(i)i}\leq \sum_{i\in {N}_2}2\alpha (1-\hat{y}_i)d'_{i}c_{s(i)i}.\]
Further by property \textbf{3e}, we know
\[\sum_{i\in {N}_2}2\alpha (1-\hat{y}_i)d'_{i}c_{s(i)i}\leq \sum_{i\in {N}_2}2\alpha (1-y'_i)d'_{i}c_{s(i)i}.\]
By the definition of $d'_i$, we have
\[\sum_{i\in {N}_2} 2\alpha (1-y'_i)d'_{i}c_{s(i)i}=\sum_{i\in {N}_2} \sum_{j\in N}2\alpha (1-y'_i)d_{j}x'_{ij}c_{s(i)i}.\]
Recall that for each $i\in N_2$, the core of non-terminal cluster $M_i$, we move the amount of each location in $M_i$ to the core $i$ by Procedure 3 (Lemma \ref{lemma:extra cost of step 2}). That is, $x'_{ij}=\sum_{i'\in M_i} x_{i'j}.$ So,
\[\sum_{i\in {N}_2} \sum_{j\in N}2\alpha (1-y'_i)d_{j}x'_{ij}c_{s(i)i}=\sum_{i\in {N}_2} \sum_{j\in N} \sum_{i'\in M_i} 2\alpha (1-y'_i)d_{j}x_{i'j}c_{s(i)i}.\]

Therefore, it is sufficient to show that
\[\sum_{i\in {N}_2} \sum_{j\in N} \sum_{i'\in M_i} 2\alpha (1-y'_i)d_{j}x_{i'j}c_{s(i)i}\leq \sum_{i\in {N}_2}\sum_{j\in N} \sum_{i'\in M_i} d_j(4c_{i'j}x_{i'j}+8\alpha C_j x_{i'j} ).\]
That is, it is sufficient to show that for each $j\in N, i'\in M_i, i\in N_2$
\[2\alpha (1-y'_i)d_{j}x_{i'j}c_{s(i)i}\leq d_j(4c_{i'j}x_{i'j}+8\alpha C_j x_{i'j} ).\]

We have two cases: (a) $N'(j)=i$ and (b) $N'(j)\neq i.$ We show the above inequality holds for both cases.

\noindent \textbf{(a) $N'(j)=i.$}

Since $y'_i\in [\frac{\alpha-1}{\alpha},1), \forall i\in N_2$ (i.e., $\frac{\alpha-1}{\alpha}\leq y'_i=Z_i=\sum_{i'\in M_i} y_{i'}<1$), there exists a location $i^*\notin M_i$ such that $x_{i^*i}>0$ (Recall that we have $\sum_{r\in N} x_{ri}=1$ and $x_{ri}\leq y_r$).
More precisely, we can find a location $i^*\notin M_i$ with $x_{i^*i}>0$ and $c_{i^*i}\leq \frac{C_i}{1-y'_i}.$
Otherwise, $\sum_{r\in N} x_{ri}c_{ri}> C_i,$ a contradiction.

Since $N'(i^*)\neq i$, $c_{N'(i^*)i^*}\leq c_{ii^*}$ as $N'(i^*)$ is the closest location to $i^*$ in $N'$ and $i\in N'.$ So,
\[c_{s(i)i}\leq c_{iN'(i^*)}\leq c_{N'(i^*)i^*}+ c_{ii^*}\leq 2c_{ii^*}\leq 2\frac{C_i}{1-y'_i},\]
where the first inequality holds as $s(i)$ is the closest location to $i$ in $N_1\cup N_2-\{i\}$
and $N'(i^*)\in (N_1\cup N_2)-\{i\};$ the second inequality holds by triangle inequality.

If $C_i\leq C_j$, then we have
\begin{equation}\label{bound for case a1}
2\alpha (1-y'_i)d_{j}x_{i'j}c_{s(i)i}\leq  2\alpha d_{j}x_{i'j}2 C_i \leq 4\alpha d_{j}x_{i'j} C_j.
\end{equation}

Otherwise $C_i> C_j$. We will show that if $C_i> C_j$, then $C_i\leq 2C_j$.
Since $C_i>C_j$, we consider location $j$ before $i$ when we choose the cluster cores $N'$.
Then, $j$ can not be a cluster core. Otherwise, $i$ cannot be a cluster core as $c_{ij}\leq 2\alpha C_j<2\alpha C_i$ (by property \textbf{1a}), a contradiction.
This means there already exists a location $r\in N'$ with $C_r\leq C_j$ and $C_{rj}\leq 2\alpha C_j$ before we check whether $j$ should be chosen as a cluster core.
Note that $c_{ri}>2\alpha C_i$ as $i$ is also chosen as a cluster core and property \textbf{1b}.
Moreover, by triangle inequality $c_{rj}+c_{ij}\geq c_{ri}$.
So, $2\alpha C_i<c_{ri}\leq c_{rj}+c_{ij}\leq 2\alpha C_j+2\alpha C_j=4\alpha C_j.$ That is, $C_i\leq 2 C_j.$
Thus, for this case we have
\begin{equation}\label{bound for case a2}
2\alpha (1-y'_i)d_{j}x_{i'j}c_{s(i)i}\leq  2\alpha d_{j}x_{i'j}2 C_i \leq 8\alpha d_{j}x_{i'j} C_j.
\end{equation}

\noindent \textbf{(b) $N'(j)\neq i.$}

The proof for this case is similar as that in \cite{Charikar,Guha}. For each $i'\in M_i$, $N'(i')=i.$ If $N'(j)\neq i$, then $c_{i'i}=c_{i'N'(i')}\leq c_{i'N'(j)}$ as $N'(i')$ is the closest location to $i'$ in $N'$ and $N'(j)\in N'.$
Thus, we have
\[c_{s(i)i}\leq c_{N'(j)i}\leq c_{i'i}+c_{i'N'(j)}\leq 2 c_{i'N'(j)}\leq 2 (c_{i'j}+c_{N'(j)j}),\]
where $i'\in M_i;$ the first inequality holds as $s(i)$ is the closest location to $i$ in $(N_1\cup N_2)-\{i\}$
and $N'(j)\in (N_1\cup N_2)-\{i\}$ (note that $N_1\cup N_2 \supseteq N'$); the second and fourth inequalities hold by triangle inequalities.

By property \textbf{1a}, $c_{N'(j)j}\leq 2\alpha C_j.$ So,
\[c_{s(i)i}\leq 2c_{i'j}+4\alpha C_j.\]

Note that $0<\alpha (1-y'_i)\leq 1$ as $1> y'_i\geq \frac{\alpha-1}{\alpha}, i\in N_2.$ So, we have
\begin{equation}\label{bound for case b}
2\alpha (1-y'_i)d_{j}x_{i'j}c_{s(i)i}\leq  2d_{j}x_{i'j}(2c_{i'j}+4\alpha C_j)=d_{j}(4c_{i'j}x_{i'j}+8\alpha C_j x_{i'j}).
\end{equation}

From inequalities (\ref{bound for case a1}), (\ref{bound for case a2}) and (\ref{bound for case b}), we get
$$2\alpha (1-y'_i)d_{j}x_{i'j}c_{s(i)i}\leq d_{j}(4c_{i'j}x_{i'j}+8\alpha C_j x_{i'j}).$$
\end{proof}
\end{appendix}

\end{document}